\def\sgn{\operatorname{sgn}}
\def\Nz{\mathbb{N}}
\def\0F1{{{}_0F_1}}
\def\al{\alpha}
\def\d{\partial}
\def\al{\alpha}
\def\Nz{\mathbb N}
\def\Zz{\mathbb Z}
\def\Rz{\mathbb R}
\def\Cz{\mathbb C}
\newtheorem{theorem}{Theorem}
\newtheorem{defi}{Definition}
\newtheorem{lem}{Lemma}
\newtheorem{ex}{Example}
\newtheorem{pr}{Proposition}
\newtheorem{co}{Corollary}
\DeclareMathAlphabet{\mathpzc}{OT1}{pzc}{m}{it}
\DeclareMathAlphabet{\mathcalligra}{T1}{calligra}{m}{n}
\begin{document}

\begin{titlepage}

 \title{No Midcost Democracy\thanks{We are grateful to Wolfgang Leininger for triggering this inquiry. We are also grateful to David K. Levine, Antoine Loeper, César Martinelli, Clemens Puppe, and the participants at the $5^{th}$ ETH Workshop on Political Economy, very especially our discussant Alberto Grillo, for valuable discussions. All errors are ours.}}

	\author{Hans Gersbach \\
		 \small KOF Swiss Economic Institute\\ [-2.0mm]
		 \small  ETH Zurich and CEPR\\  [-2.0mm]
		 \small  Leonhardstrasse 21\\ [-2.0mm]
		 \small 8092 Zurich, Switzerland\\  [-2.0mm]
		 \small hgersbach@ethz.ch \\
		\and
	\vspace{0.6mm}\\
    \and Arthur Schichl \\
        \small KOF Konjunkturforschungsstelle \\ [-2.0mm]
        \small ETH Zürich \\ [-2.0mm]
        \small Leonhardstrasse 21 \\ [-2.0mm]
		\small 8092 Zurich, Switzerland \\ [-2.0mm]
        \small aschichl@ethz.ch \\ 
        \and Oriol Tejada\footnote{Oriol Tejada acknowledges financial support from the Spanish Ministry of Science and Innovation within the call \textit{``Proyectos de Generaci\'on de Conocimiento 2021''}, PID2021-123747NA-100.}  \\ 
	\small Faculty of Economics and Business \\[-2.0mm]
	\small Universitat de Barcelona and BEAT \\[-2.0mm]
	\small Diagonal 690-696 \\[-2.0mm]
	\small 08034 Barcelona, Spain \\[-2.0mm]
	\small oriol.tejada@ub.edu \\
	}
	
\vspace{-1.5cm}
	\date{This version:  July 2025}
	\maketitle\thispagestyle{empty}

\vspace{-1.2cm}

    \begin{abstract}
        
        Which level of voting costs is optimal in a democracy? This paper argues that intermediate voting costs—what we term a "Midcost democracy"—should be avoided, as they fail to ensure that electoral outcomes reflect the preferences of the majority. We study a standard binary majority decision in which a majority of the electorate prefers alternative $A$ over alternative $B$. The population consists of partisan voters, who always participate, and non-partisan voters, who vote only when they believe their participation could be pivotal, given that voting entails a cost. We show that the probability of the majority-preferred alternative $A$ winning is non-monotonic in the level of voting costs. Specifically, when voting costs are either high or negligible, alternative $A$ wins in all equilibria. However, at intermediate cost levels, this alignment breaks down. These findings suggest that democratic systems should avoid institutional arrangements that lead to moderate voting costs, as they may undermine the majority principle.

\medskip 

\noindent {{\bf Keywords:}} elections; Poisson games; turnout; private value; voting costs; democracy design

\medskip

\noindent {{\bf JEL Classification:}}  C72; D70; D72; H8

        
    \end{abstract}

\end{titlepage}

\onehalfspacing

\newpage


\section{Introduction}


The cost of voting plays a central role in the functioning of democratic systems. While it is partly determined by individuals’ choices and characteristics, it is also influenced by institutional design, which is the focus of this paper. Governments, or more generally, designers of social voting schemes, can shape voting costs through various mechanisms:

\begin{itemize}
    \item simplifying the issues at stake and improving the clarity and accessibility of explanatory materials,
    \item easing or eliminating voter registration requirements,
    \item investing in comprehensive voter education initiatives,
    \item expanding the number of polling stations and increasing the availability of voting booths and personnel to reduce wait times and travel distances (for in-person voting),
    \item streamlining the vote-by-mail process by including pre-addressed, pre-stamped return envelopes and designing ballots with a clear, user-friendly layout (for mail voting),
    \item simplifying the online voting procedure and offering an efficient helpline to support voters throughout the process (for online voting).
\end{itemize}

Electronic voting, in particular, can substantially decrease the costs of voting, since traveling and waiting time can be saved. Yet, the costs of registering, handling the electronic system, and the security credentials associated with electronic voting still have to be incurred. That is, voting costs can never be eliminated completely, no matter how voting takes place.

In this paper, we ask whether, and if so under which circumstances, it is desirable to lower the costs of voting to the lowest  feasible level. To address this question, we use a  standard model of costly voting in which a polity decides between two alternatives, and we make two assumptions. First, lowering the costs of voting is possible, but not down to zero. Ultimately, some time is needed to inform oneself, perform the act of voting, and to fill out the ballot. Second, a fraction of voters always votes, independent of the costs. The literature has identified various reasons why such a group of voters participates in elections despite the costs. For instance, this could be due to social norms or ethical concerns developed in seminal papers by \citep[]{LevineMattozzi2020,feddersen,feddersen2006ethical,CoateBesley1998}.

Given the above setup, we examine at which level a social voting procedure designer who wants to maximize ex ante utilitarian welfare  should set the costs of voting for all citizens.  Utilitarian welfare takes into account the probability that the alternative with higher ex ante support is implemented. Also, the expected aggregate costs of voting matter for utilitarian welfare. Yet, in practice, the first objective typically dominates the second one as voting costs, even in the aggregate, are considerably less important than the consequences of policy decisions. Accordingly, we 
assume that the social voting procedure designer aims at choosing the level of costs as a mere instrument, namely to maximize the probability that the alternative wins which generates higher expected utilitarian welfare.\footnote{We discuss in the concluding section how voting costs can be taken into account in the design of optimal voting procedures.}  

For this purpose, we consider a standard binary majority decision. A majority in the polity favors one alternative ($A$) over the other alternative ($B$). Some citizens are partisan voters and always turn out to vote, while the other citizens only vote if they perceive that their votes might make a difference, as voting is costly. The main task is to set the voting cost at levels at which alternative $A$ wins with the highest possible probability, and if possible, with probability one.

Our main results are as follows. First, we characterize all possible type--symmetric Bayesian Nash equilibria (equilibria, in short) for each level of voting costs. In a type--symmetric equilibrium, players with the same preferences play the same strategy. It turns out that such a characterization requires particular mathematical tools and reveals different types of equilibria that occur, which render our analysis challenging.  For a given level of voting costs, we have  unique or multiple equilibria. Below we provide a detailed summary of these formal results.

Second, we identify an intermediate range of costs for which it is not guaranteed that alternative $A$ wins and thus the  will of the majority in the polity fails to be reflected in the voting outcome. In particular, we establish that the likelihood that alternative $A$ wins is non-monotonic in the cost of voting. As a consequence, intermediate voting costs in a democracy, i.e., what we call the ``Midcost democracy",   should be avoided.\footnote{If the minimum feasible voting costs falls in the region of intermediate costs, then such a minimum feasible cost should not be chosen and a greater cost should be implemented instead.} 

Third, we show that all equilibria can be recovered in alternative versions of the game. In one version, we consider a game in which players select only pure strategies, but we drop type symmetry. In the other version, we assume that two parties compete and can coordinate how many supporters are going to vote.

It is useful to provide a summary of all possible types of  the different type-symmetric equilibria that can exist for particular cost levels. In a so-called coin-toss equilibrium, all non-partisan voters play mixed strategies and both alternatives receive the same votes in expectation and have a chance of one half to win. 

In all other equilibria, the majority in the electorate also wins the voting outcome. There are five types of such equilibria.
There are two types of so-called partial--absenteeism equilibria if either all non-partisan $A$--supporters or all non-partisan $B$--supporters do not participate in voting, respectively. A special case of the above is when both non-partisan $A$--supporters and non-partisan $B$--supporters do not participate and the outcome is solely determined by the partisan voters who always vote. This equilibrium is called the no--queue equilibrium.

Symmetrically, if either all non-partisan $A$--supporters or all non-partisan $B$--supporters vote, we obtain two types of equilibria which are called partial--saturation equilibria.
A special case occurs when all non-partisan voters in the polity and thus all voters cast their vote. This is called an all--swipe equilibrium. 

Hence, a voting procedure designer should set the voting costs at a level at which the coin-toss equilibrium does not occur and at least one of the other types of equilibria does occur. It will turn out that at least one type of the equilibria, in which alternative $A$ wins and thus the will of the majority prevails, exists for any cost level and thus the task is simply to avoid setting costs for which the coin-toss equilibria do exist.  

Incorporating voting costs into voting theory goes back at least to \cite{RikerOrdeshook1968} and the later works of \cite{palfrey1983strategic} and \cite{ledyard1984pure}. The theory of costly voting has seen some significant advancement ever since (see \cite{borgers}). On the one hand, it has been applied to institutional settings and small-sized electorates. On the other hand, in the limit of very large electorates, \cite{herrera_stuff} consider a Poisson voting game, where the individual voting costs are drawn from a distribution. \cite{polborn} and \cite{xefteris} study costly voting with three alternatives.

 Our paper is normative in that we aim to determine a main variable of the above models, the voting cost(s). There exist previous attempts at the problem of voting design which focused on other aspects of elections.
Subsidizing or forcing a subset of voters to vote in a costly voting setting is studied in~\cite{AV}. Modifying voting costs through social norms is analyzed in~\cite{LevineMattozzi2020}. Although for simplicity we focus on a setup in which all voters incur the same voting costs, similar insights would be obtained if we assumed heterogeneous costs for voters drawn independently from the same non-degenerate distribution~\citep[see e.g.][]{herrera_stuff}.

The paper is organized as follows. In Section~\ref{sec:model} we introduce our baseline setting and set up the notation. In Section \ref{sec:tools} we introduce the mathematical tools to analyze the equilibrium conditions. In Section \ref{sec:analysis} we characterize all equilibria of the voting game.  In Section \ref{sec:equivalent}, we show that the equilibria can be recovered in an alternative voting game. Section~\ref{sec:conclusion} concludes. The main proofs are in the appendix.

\section{Model}\label{sec:model}

\subsection{The Set-up}

A society must choose one of two alternatives, denoted by~$A$ and~$B$. The total number of voters is denoted by $N \in \mathbb{N}$. 

In the society, there are two groups of voters. One group always votes and thus is willing to incur any cost of voting (these citizens are henceforth called ``partisan voters''), either because these voters follow social norms, want to behave ethically, or are incentivized to vote by being given a subsidy equal to their cost of voting. The other group of citizens ponders whether to vote and to incur the cost of voting (they are called ``non-partisan voters''). We denote by~$p$  the proportion of partisan voters with respect to the total number of voters. Thus, $1-p$ is the proportion of voters that are non-partisans. The cost of voting comprises the components mentioned in the Introduction, most notably the time and effort needed to vote. We assume the value of~$p$ to be common knowledge, i.e., in most of our results later on, $p$ is fixed.

We denote the proportion of $A$-supporters with respect to the total number of voters by $p_A$. Then the proportion of $B$-supporters is $p_B := 1 - p_A$. We also assume that all events (partisanship and preference) are independent across all voters.  Without loss of generality, we assume that alternative $A$ is the ex ante favorite, i.e., $p_A>\frac{1}{2}.$ Preferences are private information, while the values of~$p_A$ and~$p_B$ are common knowledge, i.e., again, for most of our results, $p_A$ is fixed.

Since we are concerned with large populations, we assume that the exact size of the electorate is unknown to the voters, although they have sufficient information to give a good estimate of $N$. Quantitatively, in our model, the voter assumes that the total number of voters at the moment of the election is distributed like a Poisson random variable with parameter~$N$. For convenience, we then use the following notation  throughout the paper, for any $m\in \mathbb{N}$ and $\lambda>0$:
\begin{equation*}
    P_x(m):=\frac{x^m}{e^x\cdot m!}.
\end{equation*}
That is, $P_x(m)$ denotes the probability that a Poisson random variable with parameter (or expected value)~$x$ is equal to~$m$. Furthermore, we assume the voter to be aware of the fact  that partisanship and preference are independent from each other and the size of the citizenship. Hence, the voter  deduces from the decomposition theorem \citep{poisson} that the number of partisan voters $x_{par}$ and the number of non--partisan voters $x_{non}$ are also independent Poisson distributed random variables with parameters $Np$ and $N(1-p)$, respectively. Finally, the political preference of a citizen is independent of the population and partisanship. Therefore, again, the decomposition theorem yields that $x_{par}^A$, $x_{par}^B$, $x_{non}^A$ and $x_{non}^B$---the number of partisan $A$--voters, partisan $B$--voters, non--partisan $A$--voters and non--partisan $B$--voters respectively---are again independent Poisson distributed random variables with parameters $x_A := Npp_A$, $x_B := Np(1-p_A)$, $N(1-p)p_A$, and $N(1-p)(1-p_A)$. 

The citizens' utilities are as follows: No matter whether they are partisan or not, if a citizen's preferred alternative is chosen, the citizen receives a utility gain which is normalized to one, while the utility is normalized to zero if the least preferred alternative is chosen by the citizenry. Besides this utility component, a voter incurs an additional utility $-c<0$ when casting a vote, so $c>0$ can be seen as the voting cost.

To analyze how voting costs should be set, we model the voting process as a (static) game $\mathcal{G}(c)$, which we shall just denote by $\mathcal{G}$ for simplicity. The players of the game are the $N(1-p)$ non-partisan voters, who anticipate that partisan voters always vote. We order them such that the $N(1-p)p_A$ non--partisan $A$--supporters are indexed in $I_A \coloneqq [\![1, N(p-1)p_A]\!]$ and the $N(1-p)(1-p_A)$ non--partisan $B$--supporters are indexed in $I_B \coloneqq [\![N(p-1)p_A+1, N(1-p)]\!]$. Their mixed strategy set is $[0,1]$, where the two pure strategies, namely voting for their preferred alternative or abstaining, are respectively denoted by $1$ and $0$. We denote the $i$--th player's strategy by $\sigma_i$. With two alternatives $A$ and $B$, voting for the preferred alternative weakly dominates voting for the other alternative, and thus we implicitly assume that (non-partisan) voters do not use weakly dominated strategies.  

To define the utility of the players, it comes in handy to introduce the following notation for any $m,n\in \mathbb{N}$:
\begin{align*}
    f(m,n):=\begin{cases}
               1 & \text{if } m>n, \\
               \frac{1}{2} & \text{if } m=n,\\
               0 & \text{if } m<n.
            \end{cases}    
\end{align*}
The function $f$ denotes the expected gain for $A$-supporters (partisan and non-partisan) if alternative $A$ receives $m$ votes and alternative $B$ receives $n$ votes. If there is a tie $(m = n)$, we assume that each alternative is chosen with probability~$\frac{1}{2}$. 

Now, we will formally define the player's utility functions. As explained above, from the point of view of the players, the voting game is a Poisson game and the corresponding (perceived) utilities are, thus, as follows:

Let $\alpha_A,\alpha_B \in [0,1]$. We define the functions $u_A: \{0,1\} \to \mathbb{R}$ and $u_B: \{0,1\} \to \mathbb{R}$ by
\begin{align*}
&u_A(0,\alpha_A,\alpha_B) \\&\qquad\coloneqq  \sum_{a=0}^{\infty}\sum_{b=0}^{\infty}P_{Npp_A}(a)P_{Np(1-p_A)}(b)\sum_{r=0}^{\infty}\sum_{s=0}^{\infty}P_{N(1-p)p_A\alpha_A}(r)P_{N(1-p)(1-p_A)\alpha_B}(s)f(a+r, b+s)\\
&u_A(1,\alpha_A,\alpha_B)\\ &\qquad\coloneqq \sum_{a=0}^{\infty}\sum_{b=0}^{\infty}P_{Npp_A}(a)P_{Np(1-p_A)}(b)\sum_{r=0}^{\infty}\sum_{s=0}^{\infty}P_{N(1-p)p_A\alpha_A}(r)P_{N(1-p)(1-p_A)\alpha_B}(s)f(a+r+1, b+s) - c
\end{align*}
and
\begin{align*}
&u_B(0,\alpha_A,\alpha_B)\\ &\qquad\coloneqq  \sum_{a=0}^{\infty}\sum_{b=0}^{\infty}P_{Npp_A}(a)P_{Np(1-p_A)}(b)\sum_{r=0}^{\infty}\sum_{s=0}^{\infty}P_{N(1-p)p_A\alpha_A}(r)P_{N(1-p)(1-p_A)\alpha_B}(s)f(b+s, a+r)\\
&u_B(1,\alpha_A,\alpha_B) \\&\qquad\coloneqq  \sum_{a=0}^{\infty}\sum_{b=0}^{\infty}P_{Npp_A}(a)P_{Np(1-p_A)}(b)\sum_{r=0}^{\infty}\sum_{s=0}^{\infty}P_{N(1-p)p_A\alpha_A}(r)P_{N(1-p)(1-p_A)\alpha_B}(s)f(b + s + 1, a + r) - c.
\end{align*}

The expression of $u_A(1,\alpha_A,\alpha_B)$, for example, is the perceived utility of an $A$--supporter who votes, given that the expected numbers of non--partisan $A$--supporters and  non--partisan $B$--supporters who vote are $N(1-p)p_A\al_A$ and $N(1-p)(1-p_A)\al_B$ respectively. If we let $E_{a,b,r,s}$ denote the event in which the numbers of partisan $A$--supporters, partisan $B$--supporters, non--partisan $A$--supporters and non--partisan $B$--supporters---other than the player itself---who vote are $a$, $b$, $r$, and $s$, respectively, then $u_A(1,\alpha_A,\alpha_B)$ is just
\begin{equation}
    \sum_{a=0,b=0,r=0,s=0}^{\infty}\mathbb{P}(E_{a,b,r,s})\bigl(f(a+r+1, b+s) - c\bigr).
\end{equation}
In other words, we have a sum over $a,b,r,s$ of the utilities the voter obtains if the event $E_{a,b,r,s}$ occurs, i.e., $f(a+r+1, b+s) - c$, times the probabilities of said events. Note that, since the numbers of partisan $A$--supporters, partisan $B$--supporters, non--partisan $A$--supporters, and non--partisan $B$--supporters who vote are supposed to be independent, the probability of $E_{a,b,r,s}$ is $P_{Npp_A}(a)P_{Np(1-p_A)}(b)P_{N(1-p)p_A\alpha_A}(r)P_{N(1-p)(1-p_A)\alpha_B}(s)$. 

The other expression, $u_A(0,\alpha_A,\alpha_B)$, i.e., the perceived utility of an $A$--supporter who does not vote, given that the expected numbers of non--partisan $A$--supporters and  non--partisan $B$--supporters who vote are $N(1-p)p_A\al_A$ and $N(1-p)(1-p_A)\al_B$ respectively, is obtained in a similar fashion.

Fix $i \in [\![1, N(p-1)]\!]$ and strategies $\sigma_j$ for all $j \neq i$. Then, if $i \in I_A$, the $i$--th player's utility is given by 
\begin{align}\label{eq:utilityA}
    U_i(\sigma_i,(\sigma_j)_{j \neq i}) &\coloneqq (1-\sigma_i) u_A\biggl(0, \frac{\sum_{j \neq i, j \in I_A}\sigma_j}{N(1-p)p_A - 1}, \frac{\sum_{j \in I_B}\sigma_j}{N(1-p)(1-p_A)}\biggr) \\
    &+ \sigma_i u_A\biggl(1, \frac{\sum_{j \neq i, j \in I_A}\sigma_j}{N(1-p)p_A - 1}, \frac{\sum_{j \in I_B}\sigma_j}{N(1-p)(1-p_A)}\biggr)\notag
\end{align}
and 
if $i \in I_B$, the $i$--th player's utility is given by 
\begin{align}
    U_i(\sigma_i,(\sigma_j)_{j \neq i}) &\coloneqq (1-\sigma_i) u_B\biggl(0, \frac{\sum_{ j \in I_A}\sigma_j}{N(1-p)p_A}, \frac{\sum_{j \neq i,j \in I_B}\sigma_j}{N(1-p)(1-p_A) - 1}\biggr) \\
    &+ \sigma_i u_B\biggl(1, \frac{\sum_{j \in I_A}\sigma_j}{N(1-p)p_A}, \frac{\sum_{j \neq i,j \in I_B}\sigma_j}{N(1-p)(1-p_A) - 1}\biggr).\notag
\end{align}
Since $U_i$ only depends on the party the $i$--th voter supports, we shall just write $U_A(\sigma_i,(\sigma_j)_{j \neq i}) \coloneqq  U_i(\sigma_i,(\sigma_j)_{j \neq i})$, if the $i$--th player is an $A$--supporter, and $U_B(\sigma_i,(\sigma_j)_{j \neq i}) \coloneqq  U_i(\sigma_i,(\sigma_j)_{j \neq i})$, if the $i$--th player is a $B$--supporter. Furthermore, we write 
\begin{equation}
    \begin{aligned}
        \alpha_A &\coloneqq \frac{\sum_{ j \in I_A}\sigma_j}{N(1-p)p_A},\notag\\
        \bar{\alpha}_A^i &\coloneqq \frac{\sum_{j \neq i, j \in I_A}\sigma_j}{N(1-p)p_A - 1},\notag\\
        \alpha_B &\coloneqq \frac{\sum_{j \in I_B}\sigma_j}{N(1-p)(1-p_A)},\notag\notag\\
        \bar{\alpha}_B^i &\coloneqq \frac{\sum_{j \neq i,j \in I_B}\sigma_j}{N(1-p)(1-p_A) - 1}.\notag
    \end{aligned}
\end{equation}

The idea behind utility functions $U_A$ and $U_B$ is that, although the number of players of each category (partisanship and preference) is fixed, these numbers are unknown to the players. An $A$--supporter, for example, perceives the numbers of partisan $A$--voters $X_A$, partisan $B$--voters $X_B$, non--partisan $A$--voters $Y_A$, and non--partisan $B$--voters $Y_B$ (other than themselves) as independent and Poisson distributed random variables, with respective parameters $x_A \coloneqq Npp_A$, $x_B \coloneqq Np(1-p_A)$, $y_A:=N(1-p)p_A\bar{\alpha}_A^i$ and $y_B := N(1-p) (1-p_A) \alpha_B$. The expression (\ref{eq:utilityA}) is actually nothing more than
\begin{equation*}
\mathbb{E}[\sigma_i\bigl(f (X_A + Y_A + 1, X_B + Y_B) - c\bigr) + (1-\sigma_i)f (X_A + Y_A, X_B + Y_B)],
\end{equation*}
where the random variables $X_A,X_B,Y_A,Y_B$ are distributed as described above.

Note that to derive the above conditions, we have used the  environmental equivalence property \citep{poisson}, which states that any player of the game, once s/he finds out that s/he is a player of the game, forms the same beliefs about the number of other players as an outside observer.

We then study the  type-symmetric Bayesian Nash equilibria (equilibria, in short) of  voting game~$\mathcal{G}$, either in pure or mixed strategies. By type-symmetric equilibria, we mean  that all  non-partisan voters supporting the same alternative  use the same strategy.  We let~$\alpha_A$  denote the probability that a (non-partisan) $A$-supporter votes and $\alpha_B$ denote the probability that a (non-partisan) $B$-supporter votes. Then an equilibrium of $\mathcal{G}$ is a pair $(\alpha_A,\alpha_B) in [0,1]\times [0,1]$.


\subsection{Equilibrium Conditions and Types of Equilibria}

Let us describe the different type--symmetric equilibria that can occur in game $\mathcal{G}$. The utility functions $q \mapsto U_A(q)$ and $q \mapsto U_B(q)$ are linear in $q$, hence they are either constant, or attain their unique maximum in either $0$ or $1$.

Suppose first that $(\alpha_A,\alpha_B)$ is a type--symmetric equilibrium for $\alpha_A,\alpha_B \in ]0,1[$. Then, $q \mapsto U_A(q) = (1-q)u_A(0, \alpha_A,\alpha_B) + qu_A(1, \alpha_A,\alpha_B)$ and $q \mapsto U_B(q) = (1-q)u_B(0, \alpha_A,\alpha_B) +  qu_B(1, \alpha_A,\alpha_B)$ attain their maxima in $\alpha_A$ and $\alpha_B$ respectively. Indeed, in a type-symmetric equilibrium, all $A$--supporters play the same strategy $q_A$, and, thus, $\alpha_A = \bar{\alpha}_A^i = q_A$. A similar reasoning for $B$--supporters shows that $\alpha_A$ and $\alpha_B$ must be best responses. Hence, since $U_A$ and $U_B$ are both linear, they are constant, i.e., the indifference conditions
\begin{equation}\label{A_supporter}
    c = \sum_{a=0}^{\infty}\sum_{b=0}^{\infty}P_{x_A}(a)P_{x_B}(b)\sum_{r=0}^{\infty}\sum_{s=0}^{\infty}P_{y_A}(r)P_{y_B}(s)(f(a+r+1, b+s) - f(a+r, b+s))
\end{equation}
of an $A$--supporter and
\begin{equation}\label{B_supporter}
    c = \sum_{a=0}^{\infty}\sum_{b=0}^{\infty}P_{x_A}(a)P_{x_B}(b)\sum_{r=0}^{\infty}\sum_{s=0}^{\infty}P_{y_A}(r)P_{y_B}(s)(f(b + s + 1, a + r) - f(b+s, a+r))
\end{equation}
of a $B$--supporter need to be satisfied. We call an equilibrium of this form a coin--toss equilibrium. It will turn out that the expected numbers of voters for each of the two alternatives are equal in a coin--toss equilibrium. It then depends on arbitrarily small differences in the turnout which alternative wins and the probability that a given alternative has a higher turnout of supporters is $\frac{1}{2}$. 

Now, suppose, for example that $\alpha_A = 0$ in a type--symmetric equilibrium, i.e., all $A$--supporters do not participate in the vote. Then, $q \mapsto U_A(q) = (1-q)u_A(0, 0,\alpha_B) + qu_A(1, 0,\alpha_B)$ must attain a maximum in $0$. Unlike before, this does not require $U_A$ to be constant, but $U_A$ to be decreasing in $q$. In other words, the indifference condition of an $A$--supporter (\ref{A_supporter}) is replaced by the border condition
\begin{equation}\label{A_supporter_partmix}
    c \geq \sum_{a=0}^{\infty}\sum_{b=0}^{\infty}P_{x_A}(a)P_{x_B}(b)\sum_{r=0}^{\infty}\sum_{s=0}^{\infty}P_{y_A}(r)P_{y_B}(s)(f(a+r+1, b+s) - f(a+r, b+s)).
\end{equation}
Analogously, if $\alpha_B = 0$
the indifference condition of a $B$--supporter (\ref{B_supporter}) is replaced by the border condition
\begin{equation}\label{B_supporter_partmix}
    c \geq \sum_{a=0}^{\infty}\sum_{b=0}^{\infty}P_{x_A}(a)P_{x_B}(b)\sum_{r=0}^{\infty}\sum_{s=0}^{\infty}P_{y_A}(r)P_{y_B}(s)(f(b + s + 1, a + r) - f(b+s, a+r)).
\end{equation}
We call such equilibria partial--absenteeism equilibria.

A special case of the above is $(\alpha_A, \alpha_B) = (0,0)$. If both (\ref{A_supporter_partmix}) and (\ref{B_supporter_partmix}) are satisfied in this case, we say that $(0,0)$ is the no--queue equilibrium.

Symmetrically, if either $\alpha_A$ or $\alpha_B$ or both of them are equal to $1$, the indifference conditions (\ref{A_supporter}) and (\ref{B_supporter}) are respectively replaced by the border conditions
\begin{equation}\label{A_supporter_partmix_sat}
    c \leq \sum_{a=0}^{\infty}\sum_{b=0}^{\infty}P_{x_A}(a)P_{x_B}(b)\sum_{r=0}^{\infty}\sum_{s=0}^{\infty}P_{y_A}(r)P_{y_B}(s)(f(a+r+1, b+s) - f(a+r, b+s))
\end{equation}
and
\begin{equation}\label{B_supporter_partmix_sat}
    c \leq \sum_{a=0}^{\infty}\sum_{b=0}^{\infty}P_{x_A}(a)P_{x_B}(b)\sum_{r=0}^{\infty}\sum_{s=0}^{\infty}P_{y_A}(r)P_{y_B}(s)(f(b + s + 1, a + r) - f(b+s, a+r)).
\end{equation}
We call such equilibria partial--saturation equilibria.

A special case of the above is $(\alpha_A, \alpha_B) = (1,1)$. If both (\ref{A_supporter_partmix_sat}) and (\ref{B_supporter_partmix_sat}) are satisfied in this case, we say that $(1,1)$ is the all--swipe equilibrium.

\subsection{The winning party}

In a voting game with pure strategies, it is trivial to determine the winning party---everything comes down to counting the respective numbers of $A$-- and $B$--supporters who chose the strategy voting over the strategy abstaining. The situation becomes more complex in a mixed strategy game, as the number of players who vote is probabilistic, and, thus, so is the winner of the election.

Let $(\alpha_A,\alpha_B)$ be a type-symmetric equilibrium of $\mathcal{G}$. In this equilibrium, the $N(1-p)p_A$ non--partisan $A$--supporters all choose the strategy voting with probability $\alpha_A$ and the $N(1-p)(1-p_A)$ non--partisan $A$--supporters all choose the strategy voting with probability $\alpha_B$. Hence, the number of non--partisan $A$--supporters who vote is distributed like a binomial random variable $Z_A\sim B(N(1-p)p_A,\alpha_A)$ and  the number of non--partisan $B$--supporters who vote is distributed like a binomial random variable $Z_B\sim B(N(1-p)(1-p_A),\alpha_B)$. So, the expected values of non--partisan $A$-- and $B$--supporters who vote are respectively $e_A \coloneqq N(1-p)p_A\alpha_A$ and $e_B \coloneqq N(1-p)(1-p_A)\alpha_B$ and, thus, the expected number of $A$-- and $B$--voters are respectively $Np_A(p + (1-p)\alpha_A)$ and $N(1-p_A)(p + (1-p)\alpha_B)$. Therefore, we would ``expect'' party $A$ to win if 
\begin{equation}\label{eq:wincond}
Np_A(p + (1-p)\alpha_A) > N(1-p_A)(p + (1-p)\alpha_B).
\end{equation}

That the standard deviations of $Z_A$ and $Z_B$ are given by $\sigma_A \coloneqq \sqrt{N}\sqrt{(1-p)p_A\alpha_A(1-\alpha_A)}$ and $\sigma_B \coloneqq \sqrt{N}\sqrt{(1-p)(1-p_A)\alpha_B(1-\alpha_B)}$ respectively. We have $\frac{\sigma_A}{e_A} \sim_{N \to \infty} \frac{1}{\sqrt{N}}$. This means, in some sense, that for $N$ large enough, the respective numbers of $A$--voters and $B$--voters almost show a deterministic behavior, i.e., the number of voters is almost equal to the expected number of voters with a very high probability. Therefore, since we are mainly interested in the voting behavior of large populations, we shall write from now on, that party $A$ is the winner of the election if condition (\ref{eq:wincond}) holds.

Nevertheless, we stress that, even if condition (\ref{eq:wincond}) holds, party $B$ might still win the election, albeit with a small likelihood---given that $Npp_A < N(1-pA)$. Just in the cases of the no--queue and all--swipe equilibria, the numbers of voters for both parties are completely deterministic, such that party $A$ always wins. Note that this is no logical contradiction to the existence of an all--swipe equilibrium, as the $B$--supporters in our model are unaware of their certain loss---recall that in their subjective perception, the numbers of players and supporters of each party are themselves probabilistic. 

\section{Mathematical Tools and Simplifications}\label{sec:tools}

\subsection{Mathematical Tools}

In this subsection, we introduce the mathematical tools to examine the equilibrium conditions. 
Before we  introduce these tools, let us recall that $O(f(m))$ denotes the class of functions that are asymptotically dominated by $f(m)$. Formally, for any $g(m)\in O(f(m))$, we have $\lim_{m\rightarrow \infty}\frac{|g(m)|}{|f(m)|}<\infty$. Furthermore, $o(f(m))$ denotes the class of functions that are asymptotically strictly dominated by $f(m)$. Formally, for any $g(m)\in o(f(m))$, we have $\lim_{m\rightarrow \infty}\frac{|g(m)|}{|f(m)|}=0$.

 We next introduce the definitions of main functions we will use. We start with the definition
of a generalized hypergeometric function. 

\begin{defi}\label{de:hypgeom}
Let $a \in \Rz \setminus \Zz_-$. The generalized hypergeometric function $\0F1(;a,z)$ is defined by 
$$
	\0F1(;a,z) := \sum_{k=0}^\infty \frac1{(a)_k}\frac{z^k}{k!},
$$
where $(a)_k$ is the rising factorial Pochhammer symbol, which is recursively defined as
$$
	\begin{aligned}
		(a)_0 &= 1 \\
		(a)_n &= (a)_{n-1}(a+n-1).
	\end{aligned}
$$
For integers $a$, we have $(a)_n = \tfrac{(a+n-1)!}{(a-1)!}$, with the special case $(1)_n = n!$.
\end{defi} 
We will need the following special cases:
\begin{ex}
In particular, 
\begin{equation}
	\begin{aligned}
		\0F1(;1,z) &= \sum_{k=0}^\infty \frac{z^k}{(k!)^2} \\
		\0F1(;2,z) &= \sum_{k=0}^\infty \frac{z^k}{k!(k+1)!} \\
		\0F1(;3,z) &= 2\sum_{k=0}^\infty \frac{z^k}{k!(k+2)!}.
	\end{aligned}
\end{equation}
\end{ex}
These functions have first been introduced as solutions of the so-called hypergeometric differential equation. In particular, the differential equation satisfied by $\0F1(;1,z)$ is the following.
\begin{lem}\label{le:hgeomdiff}
We have $$\frac{\d}{\d z}\0F1(;1,z) = \0F1(;2,z).$$ It follows for all $z \in \Cz$,
\begin{equation}
z\frac{\d^2}{\d z^2}\0F1(;1,z) + \frac{\d}{\d z}\0F1(;1,z) - \0F1(;1,z) = 0.
\end{equation}
\begin{proof}
This is a straightforward computation---we derive the hypergeometric series term by term. First we note that, for all $z \in \Cz$, 
\begin{equation}
z\frac{\d}{\d z}\0F1(;1,z) = \sum_{k\geq1} \frac{1}{k!(k-1)!}z^k,
\end{equation}
and then, we take the derivative on both sides.
\end{proof}
\end{lem}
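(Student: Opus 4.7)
The plan is to verify both identities by direct manipulation of the defining power series. Since $\sum_{k \geq 0} |z|^k/(k!)^2$ converges for every $z \in \Cz$ (even faster than the exponential series, by the ratio test), $\0F1(;1,z)$ is an entire function whose series converges absolutely and uniformly on compact subsets of $\Cz$, which legitimizes term-by-term differentiation without any further analytic worry.

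For the first identity, I will start from $\0F1(;1,z) = \sum_{k \geq 0} z^k/(k!)^2$, differentiate termwise to obtain $\sum_{k \geq 1} k z^{k-1}/(k!)^2 = \sum_{k \geq 1} z^{k-1}/(k!\,(k-1)!)$, and then shift the index via $j = k-1$ to arrive at $\sum_{j \geq 0} z^j/((j+1)!\, j!) = \0F1(;2,z)$, as required.

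For the ODE, I will differentiate once more to obtain $\frac{\d^2}{\d z^2}\0F1(;1,z) = \sum_{k \geq 2} z^{k-2}/(k!\,(k-2)!)$, multiply by $z$, and re-index with $j = k-1$ to rewrite this as $\sum_{j \geq 1} z^j/((j+1)!\,(j-1)!)$. Adding the already-computed $\frac{\d}{\d z}\0F1(;1,z) = \sum_{j \geq 0} z^j/((j+1)!\, j!)$ and combining the $j \geq 1$ coefficients over the common denominator $(j+1)!\, j!$ produces the numerator $1 + j$, which cancels the factor $(j+1)$ in the denominator to leave $1/(j!)^2$. Folding in the $j=0$ term (which equals $1 = 1/(0!)^2$) shows that the sum equals $\sum_{j \geq 0} z^j/(j!)^2 = \0F1(;1,z)$, so the entire left-hand side of the displayed ODE vanishes.

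The main obstacle is essentially bookkeeping: keeping the index shifts straight and verifying the factorial identity $1/((j+1)!\, j!) + 1/((j+1)!\,(j-1)!) = 1/(j!)^2$, which boils down to $(j+1)/((j+1)!\, j!) = 1/(j!)^2$. No analytic subtleties arise, since the series in question are entire and differentiation commutes with summation throughout. An alternative, less hands-on route would be to cite the classical fact that $\0F1(;a,z)$ is the canonical solution of the confluent hypergeometric limit equation $z w'' + a w' - w = 0$ and specialize to $a=1$, but the termwise computation is shorter and keeps the exposition self-contained.
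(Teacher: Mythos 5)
Your proof is correct and follows essentially the same route as the paper: term-by-term differentiation of the defining series, justified by entirety of $\0F1(;1,\cdot)$. The only cosmetic difference is that the paper obtains the ODE in one stroke by differentiating the identity $z\,\tfrac{\d}{\d z}\0F1(;1,z)=\sum_{k\geq1}\tfrac{z^k}{k!(k-1)!}$ (i.e.\ reading the left side as $\tfrac{\d}{\d z}\bigl(z\,\0F1'(;1,z)\bigr)=z\,\0F1''+\0F1'$), whereas you expand the second derivative separately and recombine coefficients via $\tfrac{1}{(j+1)!\,j!}+\tfrac{1}{(j+1)!\,(j-1)!}=\tfrac{1}{(j!)^2}$ --- both are the same termwise computation.
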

In order to simplify some of our further computations, we also introduce the so-called modified Bessel functions of first kind. 
\begin{defi}
Let $a \in \Rz \setminus \Zz^*_-$. The modified Bessel function of first kind and order $a$ is defined by
$$
	I_a(z) := \frac{\bigl(\tfrac z2\bigr)^a}{\Gamma(a+1)}\0F1(;a+1,\tfrac14 z^2).
$$
\end{defi}
We shall only need the Bessel functions of orders $0$ and $1$.
\begin{ex}
The modified Bessel functions of orders $0$ and $1$ are given by
$$
	\begin{aligned}
		I_0(z) &= \0F1(;1,\tfrac14 z^2)\\
		I_1(z) &= \tfrac z2\0F1(;2,\tfrac14 z^2).
	\end{aligned}
$$
\end{ex}
Lemma~\ref{le:hgeomdiff} immediately yields that $I_0$ satisfies the following differential equation.
\begin{co}\label{co:Bessdiff}
For all $z\in \Cz$, $$I_0(z)' = I_1(z)$$ and
\begin{equation}
z\biggl(\frac{\d^2}{\d z^2} I_0(z) - I_0(z)\biggr) = - \frac{\d}{\d z} I_0(z).
\end{equation}
\end{co}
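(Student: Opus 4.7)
The plan is to reduce both claims to Lemma~\ref{le:hgeomdiff} via the chain rule, using the substitution $w=\tfrac14 z^2$ that connects the two definitions. Writing $g(w):=\0F1(;1,w)$ and $h(w):=\0F1(;2,w)$, we have $I_0(z)=g(w)$ and $I_1(z)=\tfrac{z}{2} h(w)$. Lemma~\ref{le:hgeomdiff} gives $g'(w)=h(w)$ and the hypergeometric ODE $w g''(w)+g'(w)-g(w)=0$. Everything below is a mechanical computation on top of these two facts.

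For the first identity, I would differentiate $I_0(z)=g(w)$ with respect to $z$ using the chain rule, obtaining $I_0'(z)=g'(w)\cdot\tfrac{z}{2}=h(w)\cdot\tfrac{z}{2}$ by the first part of Lemma~\ref{le:hgeomdiff}. Comparing with $I_1(z)=\tfrac{z}{2} h(w)$, this is exactly $I_1(z)$.

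For the second identity, I would differentiate again, obtaining
\begin{equation*}
I_0''(z)=g''(w)\cdot\tfrac{z^2}{4}+g'(w)\cdot\tfrac12 = w\,g''(w)+\tfrac12 g'(w),
\end{equation*}
since $w=\tfrac14 z^2$. The hypergeometric ODE rewrites the $w\,g''(w)$ term as $g(w)-g'(w)$, so
\begin{equation*}
I_0''(z)= g(w)-g'(w)+\tfrac12 g'(w)= I_0(z)-\tfrac12 g'(w).
\end{equation*}
Hence $I_0''(z)-I_0(z)=-\tfrac12 g'(w)$, and multiplying by $z$ yields $z(I_0''(z)-I_0(z))=-\tfrac{z}{2}g'(w)=-I_0'(z)$, which is the claimed modified Bessel equation.

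There is no real obstacle here; the only point requiring a little care is keeping the chain-rule factor $\tfrac{z}{2}$ and the substitution $w=\tfrac14 z^2$ consistent throughout, so that the $w g''(w)$ coming from the second derivative is recognized as precisely the term controlled by the hypergeometric ODE of Lemma~\ref{le:hgeomdiff}.
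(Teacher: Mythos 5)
Your proof is correct and follows exactly the route the paper intends: the paper simply states that Corollary~\ref{co:Bessdiff} follows immediately from Lemma~\ref{le:hgeomdiff}, and your computation via the substitution $w=\tfrac14 z^2$ and the chain rule is precisely the spelled-out version of that step. No gaps; the chain-rule bookkeeping and the use of the hypergeometric ODE to replace $w\,g''(w)$ are handled correctly.
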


Before proceeding, we also give an asymptotic expansion of $I_0$ and $I_1$ for $z \to +\infty$.

\begin{lem}\label{le:Bessasymp}
Asymptotically, the modified Bessel functions $I_0$ and $I_1$ behave like
\begin{equation}
\begin{aligned}
I_0(x) &=_{+\infty} \frac{e^x}{\sqrt{2\pi x}}\bigl(1 + \frac{1}{8x}\bigr) + O\bigl(\frac{e^x}{x^\frac{5}{2}}\bigr) \\
I_1(x) &=_{+\infty} \frac{e^x}{\sqrt{2\pi x}}\bigl(1 - \frac{3}{8x}\bigr) + O\bigl(\frac{e^x}{x^\frac{5}{2}}\bigr).
\end{aligned}
\end{equation}
\end{lem}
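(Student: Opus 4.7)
The plan is to derive the asymptotic expansion of $I_0$ via Laplace's method applied to the standard integral representation $I_0(x) = \frac{1}{\pi}\int_0^\pi e^{x\cos\theta}\,d\theta$. This representation follows directly from the defining power series of $I_0$ together with the Wallis identity $\int_0^\pi\cos^{2k}\theta\,d\theta=\pi\binom{2k}{k}/2^{2k}$ after interchanging sum and integral. The integrand concentrates near the saddle $\theta=0$, where $\cos\theta=1-\theta^2/2+\theta^4/24+O(\theta^6)$, so one expects $I_0(x)$ to be $e^x$ times a Gaussian correction.

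To extract the two leading terms I would write $e^{x\cos\theta}=e^x\,e^{-x\theta^2/2}\,e^{x\theta^4/24+O(x\theta^6)}$ and, in a neighborhood of $\theta=0$ where $x\theta^2$ stays moderate, Taylor-expand the last exponential as $1+\tfrac{x\theta^4}{24}+O(x^2\theta^8+x\theta^6)$. Extending the upper limit from $\pi$ to $+\infty$ introduces only exponentially small corrections, so that
\begin{equation*}
I_0(x) = \frac{e^x}{\pi}\int_0^{\infty} e^{-x\theta^2/2}\Bigl(1+\tfrac{x\theta^4}{24}\Bigr)d\theta + O\bigl(e^x x^{-5/2}\bigr).
\end{equation*}
Using the Gaussian moments $\int_0^\infty e^{-x\theta^2/2}\,d\theta=\tfrac12\sqrt{2\pi/x}$ and $\int_0^\infty \theta^4 e^{-x\theta^2/2}\,d\theta=\tfrac{3\sqrt{2\pi}}{2x^{5/2}}$, the two remaining integrals evaluate to $\frac{e^x}{\sqrt{2\pi x}}$ and $\frac{e^x}{8x\sqrt{2\pi x}}$ respectively, producing the expansion for $I_0$ claimed in the lemma.

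For $I_1$, I would differentiate the expansion of $I_0$ term by term, invoking the identity $I_0'=I_1$ supplied by Corollary~\ref{co:Bessdiff}. Since $\frac{d}{dx}\!\left(\frac{e^x}{\sqrt{2\pi x}}\right)=\frac{e^x}{\sqrt{2\pi x}}(1-\tfrac{1}{2x})$ and $\frac{d}{dx}\!\left(\frac{e^x}{8x\sqrt{2\pi x}}\right)=\frac{e^x}{8x\sqrt{2\pi x}}(1-\tfrac{3}{2x})$, collecting powers of $1/x$ yields
\begin{equation*}
I_1(x)=\frac{e^x}{\sqrt{2\pi x}}\Bigl(1+\tfrac{1}{8x}-\tfrac{1}{2x}\Bigr)+O(e^x x^{-5/2})=\frac{e^x}{\sqrt{2\pi x}}\Bigl(1-\tfrac{3}{8x}\Bigr)+O(e^x x^{-5/2}),
\end{equation*}
the discarded $-\tfrac{3}{16x^2}$ cross-term being already of order $e^x x^{-5/2}$ once multiplied by the prefactor. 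The error term survives the differentiation because, for any $g$ with $g(x)=O(x^{-5/2})$, one has $(e^x g)'=e^x(g+g')=O(e^x x^{-5/2})$.

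The main technical obstacle is the rigorous control of the two truncation errors, namely (i) replacing the upper limit $\pi$ by $\infty$ and (ii) truncating the Taylor expansion of $e^{x\theta^4/24+O(x\theta^6)}$ at first order. For (i), the tail over a fixed interval $[\delta,\pi]$ is bounded by a multiple of $e^{x\cos\delta}$, which is exponentially smaller than $e^x$; for (ii), the remainder is controlled by higher Gaussian moments, each of which contributes at most $O(e^x x^{-5/2})$. An alternative route would substitute the ansatz $I_0(x)=\frac{e^x}{\sqrt{2\pi x}}\phi(x)$ into the ODE from Corollary~\ref{co:Bessdiff} and solve recursively for the coefficients of $\phi(x)\sim 1+c_1/x+c_2/x^2+\cdots$; this fixes $c_1=1/8$ but cannot pin down the overall normalization $(2\pi)^{-1/2}$ without an external input such as the integral representation above, which is why Laplace's method seems preferable.
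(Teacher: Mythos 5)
The paper itself does not prove this lemma---it simply cites Chapter 7 of Watson's treatise---so your Laplace-method derivation is a self-contained proof rather than a reconstruction of the paper's argument. Your treatment of $I_0$ is essentially sound: the representation $I_0(x)=\frac{1}{\pi}\int_0^\pi e^{x\cos\theta}\,d\theta$ does follow from the defining series together with the Wallis integrals, the only correction at relative order $1/x$ near the saddle comes from the $x\theta^4/24$ term (the $x\theta^6$ and $(x\theta^4)^2$ terms enter only at relative order $x^{-2}$), the Gaussian moments are evaluated correctly, and the two truncation errors can be controlled as you indicate.

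The gap is in the $I_1$ step. You obtain $I_1$ by differentiating the asymptotic relation for $I_0$ term by term, and you justify carrying the error term through with the claim that $g(x)=O(x^{-5/2})$ implies $(e^x g)'=e^x\bigl(g+g'\bigr)=O\bigl(e^x x^{-5/2}\bigr)$. That inference is false in general: a bound on $g$ gives no bound on $g'$ (consider $g(x)=x^{-5/2}\sin(e^{x})$), and an asymptotic expansion cannot be differentiated without additional information about the remainder, which in your computation is known only through an $O$-estimate. As written, the coefficient $-\tfrac{3}{8x}$ is therefore not established. The fix stays entirely within your framework: differentiate under the integral sign (which is legitimate here) to get $I_1(x)=I_0'(x)=\frac{1}{\pi}\int_0^\pi \cos\theta\, e^{x\cos\theta}\,d\theta$, and run Laplace's method on this integral directly. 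The extra factor $\cos\theta=1-\tfrac{\theta^2}{2}+O(\theta^4)$ contributes $-\tfrac{1}{2x}$ via the second Gaussian moment $\int_0^\infty\theta^2e^{-x\theta^2/2}\,d\theta=\tfrac{\sqrt{2\pi}}{2}x^{-3/2}$, which combined with the $+\tfrac{1}{8x}$ from the quartic term yields $1-\tfrac{3}{8x}$ with an honest $O\bigl(e^x x^{-5/2}\bigr)$ remainder.
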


\begin{proof}
    This is a known result, see for instance Chapter 7 in \citep{watson1922treatise}.
\end{proof}

\subsection{Simplifying the Equilibrium Conditions}

Now, we show that the right--hand sides of (\ref{A_supporter}) and (\ref{B_supporter}) have particularly simple
expressions, involving only some of the---well--studied---hypergeometric functions defined in Definition~\ref{de:hypgeom}. We derive the following expression:
\begin{pr}\label{pr:rhsexpr}
Denote by $R_1$ and $R_2$ the right--hand sides of (\ref{A_supporter}) and (\ref{B_supporter}), respectively---as functions of $y_A$ and $y_B$. Then, for all $y_A, y_B \geq 0$,
\begin{align}
R_1(y_A,y_B) &= \tfrac12\bigl(\0F1(;1,(x_A+y_A)(x_B+y_B))\notag\\
		&\qquad\qquad+(x_B+y_B)\,\0F1(;2,(x_A+y_A)(x_B+y_B))\bigl)e^{-x_A-y_A-x_B-y_B} \label{eq:1D}\\
R_2(y_A,y_B) &=\tfrac12\bigl(\0F1(;1,(x_A+y_A)(x_B+y_B))\notag\\
		&\qquad\qquad+(x_A+y_A)\,\0F1(;2,(x_A+y_A)(x_B+y_B))\bigl)e^{-x_A-y_A-x_B-y_B}.\label{eq:2D}
\end{align}
Now, define 
\begin{equation}
\begin{aligned}
r_1(\al_A,\al_B) &:= R_1(N(1-p)p_A\al_A,N(1-p)(1-p_A)\al_B) \notag\\
r_2(\al_A,\al_B) &:= R_2(N(1-p)p_A\al_A,N(1-p)(1-p_A)\al_B).\notag
\end{aligned}
\end{equation} 
Then,
\begin{equation}
\begin{aligned}
r_1(\al_A,\al_B)&=\tfrac12\bigl(\0F1\bigl(;1,N^2p_A(1-p_A)(p+\al_A(1-p))(p+\al_B(1-p))\bigr)\notag\\
		&\qquad+N(1-p_A)(p+(1-p)\al_B)\notag\\
		&\qquad\qquad\cdot\0F1\bigl(;2,N^2p_A(1-p_A)(p+\al_A(1-p))(p+\al_B(1-p))\bigr)\bigl)\notag\\
		&\qquad\qquad\qquad\qquad\qquad\cdot e^{-N(p+(1-p)(p_A\al_A+(1-p_A)\al_B))}\notag\\
r_2(\al_A,\al_B)&=\tfrac12\bigl(\0F1\bigl(;1,N^2p_A(1-p_A)(p+\al_A(1-p))(p+\al_B(1-p))\bigr)\notag\\
		&\qquad+N p_A(p+(1-p)\al_A)\notag\\
		&\qquad\qquad\cdot\0F1\bigl(;2,N^2p_A(1-p_A)(p+\al_A(1-p))(p+\al_B(1-p))\bigr)\bigl)\notag\\
		&\qquad\qquad\qquad\qquad\qquad\cdot e^{-N(p+(1-p)(p_A\al_A+(1-p_A)\al_B))}.\notag
\end{aligned}
\end{equation}
\end{pr}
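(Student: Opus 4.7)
The plan is to recognize the quadruple sums defining $R_1$ and $R_2$ as slight modifications of a single Poisson probability, then use the Poisson convolution property to collapse them into the series appearing in Definition~\ref{de:hypgeom}.

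First, I would carry out an elementary case analysis of the increment $f(m+1,n) - f(m,n)$. Since $f$ takes only the values $0$, $\tfrac12$, $1$ depending on $\sgn(m-n)$, this increment vanishes except in two cases: $m=n$ (where the tie becomes an $A$--win, contributing $\tfrac12$) and $m = n-1$ (where a loss becomes a tie, again contributing $\tfrac12$). Setting $M := a+r$ and $N' := b+s$, the quadruple sum defining $R_1$ collapses to
$$
R_1(y_A, y_B) = \tfrac{1}{2}\bigl[\mathbb{P}(M = N') + \mathbb{P}(M = N' - 1)\bigr].
$$

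Next, I would invoke the Poisson convolution property cited in the decomposition theorem: since $a$ and $r$ are drawn independently from Poisson laws with parameters $x_A$ and $y_A$, we have $M \sim \mathrm{Poisson}(\lambda_A)$ with $\lambda_A := x_A + y_A$; independently, $N' \sim \mathrm{Poisson}(\lambda_B)$ with $\lambda_B := x_B + y_B$. Summing over $k$ with $M = N' = k$ yields
$$
\mathbb{P}(M = N') = e^{-\lambda_A - \lambda_B} \sum_{k=0}^\infty \frac{(\lambda_A \lambda_B)^k}{(k!)^2} = e^{-\lambda_A - \lambda_B}\,\0F1(;1, \lambda_A \lambda_B),
$$
and an index shift gives
$$
\mathbb{P}(M = N'-1) = e^{-\lambda_A - \lambda_B} \sum_{k=0}^\infty \frac{\lambda_A^k \lambda_B^{k+1}}{k!(k+1)!} = \lambda_B\,e^{-\lambda_A - \lambda_B}\,\0F1(;2, \lambda_A \lambda_B).
$$
Adding the two probabilities and multiplying by $\tfrac12$ delivers (\ref{eq:1D}). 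The formula (\ref{eq:2D}) follows by the symmetric argument applied to $f(b+s+1,a+r) - f(b+s,a+r)$: the two surviving cases become $N' = M$ and $N' = M-1$, and the latter produces the prefactor $\lambda_A$ (rather than $\lambda_B$) in front of $\0F1(;2,\cdot)$.

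Finally, the expressions for $r_1$ and $r_2$ are obtained by plugging in $x_A = Npp_A$, $x_B = Np(1-p_A)$, $y_A = N(1-p)p_A\alpha_A$, $y_B = N(1-p)(1-p_A)\alpha_B$, which gives $\lambda_A = Np_A(p + (1-p)\alpha_A)$, $\lambda_B = N(1-p_A)(p + (1-p)\alpha_B)$, and the arguments of the exponential and of $\0F1$ follow by direct expansion of $\lambda_A + \lambda_B$ and $\lambda_A \lambda_B$. The only genuinely delicate point is the case analysis of the increment of $f$: one must remember that \emph{two} values of $m$ produce a nonzero contribution (tie-to-win \emph{and} loss-to-tie), as missing either would drop one of the two $\0F1$ terms; everything else is routine series manipulation.
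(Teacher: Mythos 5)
Your proposal is correct and follows essentially the same route as the paper: both arguments rest on the observation that the increment of $f$ is $\tfrac12$ exactly when the relevant vote counts tie or differ by one, and both then collapse the quadruple sum into the series defining $\0F1(;1,\cdot)$ and $\0F1(;2,\cdot)$ before substituting the parameters. The only difference is presentational—where you invoke Poisson additivity to treat $a+r$ and $b+s$ as single Poisson variables with parameters $x_A+y_A$ and $x_B+y_B$, the paper carries out the same convolution explicitly via the binomial theorem—so no gap remains.
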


Having simplified the right--hand sides of (1) and (2), we can start to study the existence 
of equilibria depending on the voting costs $c$.

\section{Equilibrium Analysis}\label{sec:analysis}

\subsection{Coin--toss Equilibria}

We first look for coin--toss equilibria and obtain:

\begin{theorem}\label{th:mixed}
There exists at most one coin--toss equilibrium $(\al_A,\al_B)$. 
Now, define for all $z > 0$, $g(z):=(I_0(z)+I_1(z))e^{-z}$. There exists exactly one coin--toss equilibrium $(\al_A,\al_B)$ if and only if
\begin{equation}\label{eq:4x}
	g(2x_A) > 2c > g(2N(1-p_A)).
\end{equation}
In particular, $g(2x_A) \leq g(x_A + x_B) = g(Np)$, and $g(Np)$ is decreasing in $N$ with 
\begin{equation}
g(Np) =_{N \to +\infty} \sqrt{\frac{2}{\pi Np}} + O\biggl(\frac{1}{(Np)^\frac{3}{2}}\biggr).
\end{equation} 
Also the lower bound is decreasing and behaves like $\frac{1}{\sqrt{N}}$ for $N \to +\infty$.
Furthermore, if the mixed equilibrium exists, $x_A + y_A = x_B + y_B$. 
\end{theorem}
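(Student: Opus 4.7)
The plan is to collapse the pair of indifference equations $r_1(\al_A,\al_B) = c = r_2(\al_A,\al_B)$ to a single scalar equation on the ``symmetry line''. Using Proposition~\ref{pr:rhsexpr}, the difference $r_1-r_2$ factors as $\tfrac{1}{2}N[(1-p_A)(p+(1-p)\al_B) - p_A(p+(1-p)\al_A)]$ times the strictly positive factor $\0F1(;2,\cdot)\,e^{-\cdot}$, so any coin--toss equilibrium must satisfy $x_A+y_A = x_B+y_B$; this is already the ``furthermore'' claim. Setting $z := x_A+y_A = x_B+y_B$, the common $\0F1$--argument becomes $z^2$ and the exponent becomes $-2z$. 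Invoking the identities from the examples defining $I_0$ and $I_1$, namely $\0F1(;1,z^2) = I_0(2z)$ and $z\cdot\0F1(;2,z^2) = I_1(2z)$, the equilibrium condition reduces to $g(2z) = 2c$.

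Next I would show that $g$ is strictly decreasing on $(0,\infty)$. Using $I_0' = I_1$ from Corollary~\ref{co:Bessdiff}, direct differentiation gives $g'(z) = (I_1'(z) - I_0(z))e^{-z}$. The Bessel differential equation in the same corollary, rewritten via $I_0'' = I_1'$, yields $I_1'(z) = I_0(z) - I_1(z)/z$, so $g'(z) = -I_1(z)e^{-z}/z < 0$. Strict monotonicity implies that at most one $z$ solves $g(2z) = 2c$, and then $(\al_A,\al_B)$ is uniquely recovered from $\al_A = (z-x_A)/(N(1-p)p_A)$ and $\al_B = (z-x_B)/(N(1-p)(1-p_A))$, proving uniqueness.

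For the existence criterion, the constraint $(\al_A,\al_B) \in (0,1)^2$ forces $z \in (x_A, Np_A) \cap (x_B, N(1-p_A))$, which reduces to $(x_A, N(1-p_A))$ since $p_A > 1/2$ implies both $x_A > x_B$ and $Np_A > N(1-p_A)$. Monotonicity of $g$ then shows that such a $z$ exists iff $g(2x_A) > 2c > g(2N(1-p_A))$, which is (\ref{eq:4x}). The bound $g(2x_A) \leq g(x_A+x_B) = g(Np)$ follows from $2x_A \geq x_A+x_B$ (equivalent to $p_A \geq 1/2$) and $g$ decreasing, and $g(Np)$ inherits monotonicity in $N$ since $Np$ is increasing. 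For the asymptotics, summing the two expansions in Lemma~\ref{le:Bessasymp} gives $I_0(x)+I_1(x) = \tfrac{e^x}{\sqrt{2\pi x}}(2-\tfrac{1}{4x}) + O(e^x/x^{5/2})$, hence $g(x) = \sqrt{2/(\pi x)} + O(x^{-3/2})$; substituting $x = Np$ yields the stated expansion of $g(Np)$, and substituting $x = 2N(1-p_A)$ shows the lower bound scales like $1/\sqrt{N}$.

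The main obstacle is the initial reduction: spotting that the antisymmetric factor $p_A(p+(1-p)\al_A) - (1-p_A)(p+(1-p)\al_B)$ cleanly separates the two indifference conditions, and recognizing the resulting one-variable expression as exactly $\tfrac12 g(2z)$ via the Bessel-function definitions. Once this identification is in hand, monotonicity of $g$ and the asymptotic bookkeeping are routine manipulations of the Bessel identities already stated in Section~\ref{sec:tools}.
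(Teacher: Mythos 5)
Your proposal is correct and follows essentially the same route as the paper: subtracting the two indifference conditions forces $x_A+y_A=x_B+y_B$, the common condition collapses to $g(2z)=2c$ via the Bessel identities, strict monotonicity of $g$ (from $I_0'=I_1$ and the Bessel ODE, giving $zg'(z)=-I_1(z)e^{-z}$) yields uniqueness, the admissibility constraints $z\in(x_A,N(1-p_A))$ give the existence criterion (\ref{eq:4x}), and Lemma~\ref{le:Bessasymp} gives the asymptotics. The only differences are cosmetic (working with $r_1-r_2$ instead of $R_1-R_2$, and spelling out $I_1'(z)=I_0(z)-I_1(z)/z$).
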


A direct consequence of Theorem~\ref{th:mixed} is that coin--toss equilibria can only exist if the number of $A$--partisan--voters is smaller than the total number of $B$--supporters, i.e. $Npp_a \leq N(1-p_A)$. Once this condition is satisfied, the width of the interval of existence of coin--toss equilibria increases when $p$ and $p_A$ decrease. In Figure~\ref{fig:001} we sketch the bounds of the interval as a function of the population size $N$ for $p = 0.01$, i.e., for a low share of partisan--voters. Figure~\ref{fig:02} shows how these bounds behave for $p = 0.2$, i.e., when then share of partisan--voters is high. Figure~\ref{fig:limits} depicts some situations close to the limit $Npp_a = N(1-p_A)$ for different values of $p$.

\begin{figure}
    \centering
    \subfloat[$p_A = 0.52$]{\includegraphics[scale=0.5]{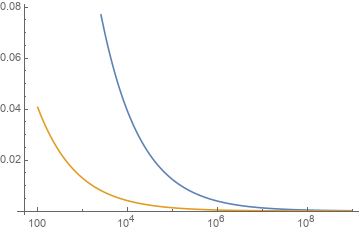}}
    \hfill
    \subfloat[$p_A = 0.75$]{\includegraphics[scale=0.5]{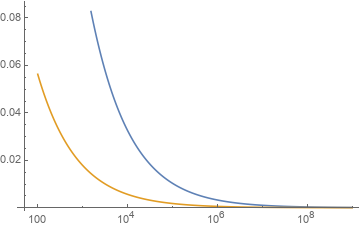}}
    \hfill
    \subfloat[$p_A = 0.90$]{\includegraphics[scale=0.5]{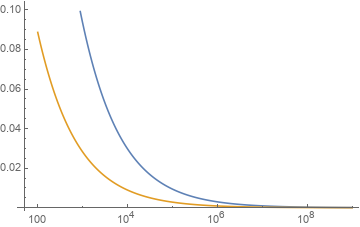}}
    \hfill
    \subfloat[$p_A = 0.99$]{\includegraphics[scale=0.5]{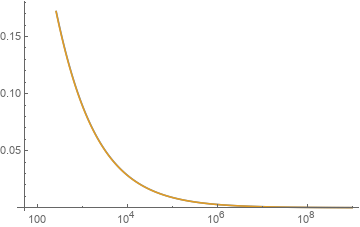}}  
    \caption{Bounds of the interval of existence of coin--toss equilibria for $p = 0.01$ as a function of the population size $N$.}
    \label{fig:001}
\end{figure}

\begin{figure}
    \centering
    \subfloat[$p_A = 0.52$]{\includegraphics[scale=0.5]{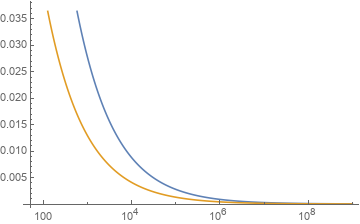}}
    \hfill
    \subfloat[$p_A = 0.60$]{\includegraphics[scale=0.5]{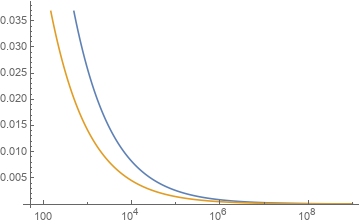}}
    \hfill
    \subfloat[$p_A = 0.75$]{\includegraphics[scale=0.5]{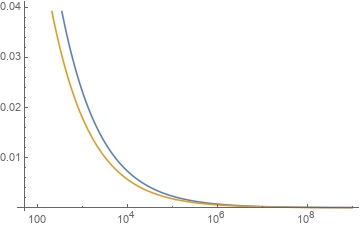}}
    \hfill
    \subfloat[$p_A = 0.83$]{\includegraphics[scale=0.5]{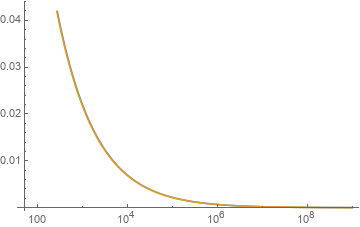}}
    \caption{Bounds of the interval of existence of coin--toss equilibria for $p = 0.2$ as a function of the population size $N$.}
    \label{fig:02}
\end{figure}

\begin{figure}
    \centering
    \subfloat[$p = 0.2, p_A = 0.83$]{\includegraphics[scale=0.5]{p20pA83.png}}
    \hfill
    \subfloat[$p = 0.5, p_A = 0.66$]{\includegraphics[scale=0.5]{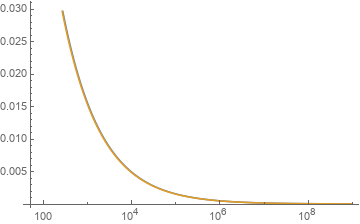}}
    \hfill
    \subfloat[$p = 0.75, p_A = 0.57$]{\includegraphics[scale=0.5]{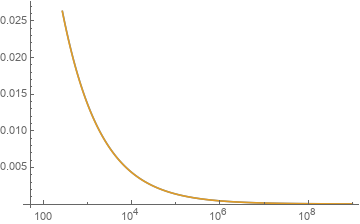}}
    \hfill
    \subfloat[$p = 0.9, p_A = 0.52$]{\includegraphics[scale=0.5]{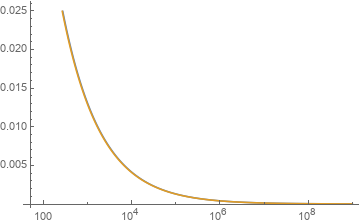}}
    \caption{Limit cases for different values of $p$.}
    \label{fig:limits}
\end{figure}

The last point of Theorem~\ref{th:mixed} implies, in short, that the parties A and B always receive an equal amount of votes in expectation in a coin--toss equilibrium. Hence, the outcome of the election does not agree with the public preference, and this is not desirable from a utilitarian welfare perspective. 

The second point of the theorem yields that a coin--toss equilibrium exists if the voting costs $c$ are in a particular interval, whose bounds decrease inversely proportionally to $\sqrt{N}$ for fixed $p$. More precisely, the upper bound decreases inversely proportionally to $\sqrt{x_A}$---where $x_A$ is the total amount of partisan voters of the larger party $A$. The lower bound decreases inversely proportionally to $\sqrt{N(1-p_A)}$---where $N(1-p_A)$ is the total amount of adherents of the smaller party $B$. Due to the square--root, this interval might be non--negligible, even for large populations. To avoid this situation, the voting costs should be either very low or they should be at least moderately high.

Next, we analyze the  existence of non--coin--toss equilibria. We will see that, unlike coin--toss equilibria, these always lead to the victory of the stronger party.

\subsection{A first set of border equilibria}

With border equilibria, we denote equilibria in which one or both values of $\al_A$ and $\al_B$ attain boundary values. In this subsection, we outline two types of such border equilibria. We start with the following case. 

\begin{pr}\label{pr:partmix}
Let $(\al_A,\al_B)$ be a partial absenteeism equilibrium. Then, $\al_A = 0$. Furthermore, if $(0,\al_B)$ is not a coin--toss equilibrium, $x_A > x_B + y_B$.
\end{pr}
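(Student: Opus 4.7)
The plan is to prove the two assertions separately, each reducing to a direct comparison of the closed-form expressions for $r_1$ and $r_2$ supplied by Proposition~\ref{pr:rhsexpr}. For the first assertion I will argue by contradiction. Suppose $(\alpha_A, 0)$ is a partial absenteeism equilibrium with $\alpha_A > 0$. Regardless of whether $\alpha_A \in (0,1)$ (triggering the indifference condition~\eqref{A_supporter}) or $\alpha_A = 1$ (triggering the border condition~\eqref{A_supporter_partmix_sat}), the best-response requirement for non-partisan $A$-supporters yields $c \leq r_1(\alpha_A, 0)$, while the border condition~\eqref{B_supporter_partmix} for non-partisan $B$-supporters yields $c \geq r_2(\alpha_A, 0)$. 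Setting $y_B = 0$ in Proposition~\ref{pr:rhsexpr}, the common $\0F1(;1,\cdot)$ and exponential factors cancel in the difference, leaving
\[
r_2(\alpha_A, 0) - r_1(\alpha_A, 0) = \tfrac{1}{2}\bigl(x_A + y_A - x_B\bigr)\,\0F1\bigl(;2, x_B(x_A+y_A)\bigr)\,e^{-x_A - y_A - x_B},
\]
which is strictly positive because $p_A > 1/2$ gives $x_A > x_B$ and $y_A \geq 0$. This contradicts $r_1 \geq r_2$, and so $\alpha_A = 0$ in any partial absenteeism equilibrium.

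For the second assertion, I will fix $\alpha_A = 0$ (hence $y_A = 0$) and use Proposition~\ref{pr:rhsexpr} again to obtain
\[
r_2(0,\alpha_B) - r_1(0,\alpha_B) = \tfrac{1}{2}\bigl(x_A - (x_B + y_B)\bigr)\,\0F1\bigl(;2, x_A(x_B+y_B)\bigr)\,e^{-x_A - x_B - y_B}.
\]
The best-response condition for the non-voting $A$-supporters is $c \geq r_1(0, \alpha_B)$ in every sub-case. I then split according to $\alpha_B$. If $\alpha_B = 0$ (no-queue), then $y_B = 0$ and $p_A > 1/2$ immediately yields $x_A > x_B = x_B + y_B$. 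If $\alpha_B \in (0,1)$, indifference of $B$-supporters gives $c = r_2(0,\alpha_B)$, and combined with $c \geq r_1$ this forces $r_2 \geq r_1$, i.e., $x_A \geq x_B + y_B$; equality would require $c = r_1 = r_2$, at which point both supporter types are simultaneously indifferent and the expected vote totals coincide---the degenerate coin-toss behavior excluded by hypothesis. If $\alpha_B = 1$, the border condition $c \leq r_2(0,1)$ combined with $c \geq r_1(0,1)$ gives the same weak inequality, and the same dichotomy applies.

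The main obstacle I anticipate is interpretive rather than computational: since the formal definition of a coin-toss equilibrium requires $\alpha_A, \alpha_B \in (0,1)$, a profile with $\alpha_A = 0$ is never formally coin-toss, so the hypothesis ``$(0,\alpha_B)$ is not a coin-toss equilibrium'' must be read as excluding the degenerate boundary $x_A = x_B + y_B$ at which both indifference equations~\eqref{A_supporter}--\eqref{B_supporter} hold and the expected vote totals coincide. Once that reading is adopted, the whole proof collapses to the clean sign identity $r_2 - r_1 \propto (x_A + y_A) - (x_B + y_B)$ read off from Proposition~\ref{pr:rhsexpr}, combined with the elementary observation $x_A > x_B$ from $p_A > 1/2$.
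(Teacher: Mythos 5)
Your proof is correct and follows essentially the same route as the paper: both arguments subtract the closed-form expressions from Proposition~\ref{pr:rhsexpr} to get $R_2-R_1=\tfrac12\bigl((x_A+y_A)-(x_B+y_B)\bigr)\0F1(;2,\cdot)e^{-\cdot}$, derive a contradiction from $r_2\leq c\leq r_1$ when $\al_B=0$ with $\al_A>0$, and then read off $x_A>x_B+y_B$ from the best-response inequalities when $\al_A=0$. Your explicit treatment of the boundary sub-cases $\al_A=1$ and $\al_B\in\{0,1\}$, and of the reading of ``not a coin--toss equilibrium'' as excluding the degenerate case $c=r_1=r_2$ (equivalently $x_A=x_B+y_B$), only spells out what the paper's proof uses implicitly when it asserts $c>R_1(0,y_B)$.
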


\begin{proof}
In a partial absenteeism equilibrium, we have either $\al_A = 0$ or $\al_B = 0$. 

First, assume that $y_B=\al_B=0$. In that case, we need to satisfy (\ref{eq:1G}) with equality and $c\geq R_2(y_A,0)$. We calculate
\begin{equation}
	R_2(y_A,0)-R_1(y_A,0)
	 =\tfrac12(x_A-x_B+y_A)\0F1(;2,x_B(x_A+y_A))e^{-x_A-x_B-y_A}>0,\label{eq:3B}
\end{equation}
since $x_A>x_B$, $y_A>0$, and the argument of the generalized hypergeometric function $\0F1$ in
(\ref{eq:3B}) is positive. From (\ref{eq:1G}) we obtain $R_2(y_A,0)-c>0$, a contradiction. 

Hence,
no partial absenteeism equilibrium with $\al_B=0$ and $\al_A \neq 0$ exists and we have proved the first point of the theorem.

Now, suppose $(0, \al_B)$ is a partial absenteeism equilibrium, which is not a coin--toss equilibrium. Then, (\ref{eq:2G}) must
be satisfied with equality, and $c> R_1(0,y_B)$. Since
\begin{equation}
\begin{aligned}
	R_2(0,y_B)-R_1(0,y_B)
	 &=\tfrac12(x_A-x_B-y_B)\0F1(;2,x_A(x_B+y_B))e^{-x_A-x_B-y_B} \\
	&= c - R_1(0,y_B) > 0. 
\end{aligned}
\end{equation}
Therefore, $x_A-x_B-y_B > 0$.
\end{proof}

Proposition~\ref{pr:partmix} is essential, since it ensures that a partial absenteeism equilibrium, which is not coin--toss, leads to party A's victory. A similar result holds for partial saturation equilibria.

\begin{pr}\label{pr:partmixsat}
Let $(\al_A,\al_B)$ be a partial saturation equilibrium. Then, $\al_B = 1$. Furthermore, if $(\al_A,1)$ is not coin--toss, $x_A + y_A > N(1-p_A) = x_B + y_B$.
\end{pr}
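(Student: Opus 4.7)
The strategy is to mirror the proof of Proposition~\ref{pr:partmix}, exploiting the symmetric structure between absenteeism and saturation equilibria. The key algebraic input from Proposition~\ref{pr:rhsexpr} is the closed form
\[
R_2(y_A,y_B)-R_1(y_A,y_B) = \tfrac12(x_A+y_A-x_B-y_B)\,\0F1\bigl(;2,(x_A+y_A)(x_B+y_B)\bigr)\,e^{-x_A-y_A-x_B-y_B},
\]
so $\sgn(R_2-R_1)=\sgn\bigl((x_A+y_A)-(x_B+y_B)\bigr)$, since the hypergeometric/exponential factor is strictly positive.

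For the first claim ($\al_B=1$), I would argue by contradiction: suppose $\al_A=1$ but $\al_B<1$. Then $y_A=N(1-p)p_A$, so $x_A+y_A=Np_A$, while $x_B+y_B<N(1-p_A)<Np_A$ (using $\al_B<1$ and $p_A>\tfrac12$). The identity above therefore yields $R_2(y_A,y_B)>R_1(y_A,y_B)$. On the other hand, the $A$-saturation border condition~(\ref{A_supporter_partmix_sat}) gives $c\leq R_1(y_A,y_B)$, whereas either~(\ref{B_supporter_partmix}) (if $\al_B=0$) gives $c\geq R_2(y_A,0)$, or indifference~(\ref{B_supporter}) for $B$-supporters (if $\al_B\in(0,1)$) gives $c=R_2(y_A,y_B)$. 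In each sub-case one deduces $R_1\geq c\geq R_2>R_1$, a contradiction. Hence every partial saturation equilibrium must satisfy $\al_B=1$.

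For the second claim, $\al_B=1$ gives $x_B+y_B=Np(1-p_A)+N(1-p)(1-p_A)=N(1-p_A)$ immediately. To establish $x_A+y_A>N(1-p_A)$, I would split on $\al_A$: if $\al_A=1$ (the all-swipe sub-case), then $x_A+y_A=Np_A>N(1-p_A)$ from $p_A>\tfrac12$; if $\al_A\in(0,1)$, the indifference condition~(\ref{A_supporter}) gives $c=R_1(y_A,y_B)$ and the $B$-saturation border~(\ref{B_supporter_partmix_sat}) gives $c\leq R_2(y_A,y_B)$; finally, if $\al_A=0$, then~(\ref{A_supporter_partmix}) and~(\ref{B_supporter_partmix_sat}) together give $R_1(0,y_B)\leq c\leq R_2(0,y_B)$. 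In the last two sub-cases the identity yields $R_2\geq R_1$, equivalently $x_A+y_A\geq x_B+y_B$.

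The main obstacle is upgrading this weak inequality to a strict one in the sub-cases $\al_A\in[0,1)$. This mirrors the step in the proof of Proposition~\ref{pr:partmix} where the non-coin-toss hypothesis is invoked to strengthen $c\geq R_1$ into $c>R_1$: equality $x_A+y_A=x_B+y_B$ would force $R_1=R_2$ and reduce both border conditions to simultaneous indifference equations, which is precisely the degenerate configuration ruled out by assuming $(\al_A,1)$ is not a coin-toss equilibrium. Thus strict inequality follows, completing the proof.
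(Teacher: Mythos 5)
Your proof is correct and follows essentially the same route as the paper: it uses the closed-form difference $R_2-R_1=\tfrac12\bigl(x_A+y_A-x_B-y_B\bigr)\,\0F1\bigl(;2,(x_A+y_A)(x_B+y_B)\bigr)e^{-x_A-y_A-x_B-y_B}$ to rule out saturation on the $A$-side by contradiction and then to read off the sign of $x_A+y_A-(x_B+y_B)$ from the equilibrium conditions when $\al_B=1$, invoking the non-coin-toss hypothesis exactly where the paper does to get strictness. The only difference is that you spell out the boundary sub-cases ($\al_B=0$ in the first part, $\al_A\in\{0,1\}$ in the second) that the paper's proof leaves implicit, which is a harmless refinement rather than a different argument.
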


We have not yet studied the existence of non coin--toss equilibria, but we know already that these do not go against public opinion if they occur.

Hence, to sum up, so far we have shown that all equilibria, which are not coin--toss, i.e. the partial equilibria, the no--queue equilibrium and the all--swipe equilibrium, lead to the victory of the stronger party $A$. 

\subsection{Partial absenteeism equilibria with $y_A=\al_A=0$ and the no--queue equilibrium}

Next, we move on to a quantitative---and very technical study---of the existence and uniqueness of partial absenteeism equilibria with $y_A=\al_A=0$. We obtain

\begin{pr}\label{pr:partmixedex}
Define for all $z, x_A \in \Rz_+$, 
$$
	h(x_A,z):=\tfrac12\bigl(\0F1(;1,x_Az)+x_A\,\0F1'(;1,x_Az)\bigr)e^{-x_A-z}.
$$
If $x_A > \sqrt{2}$, then:
\begin{enumerate}
\item
For $\min(h(x_A,x_B), h(x_A,x_A)) \leq c <\max(h(x_A,x_B), h(x_A,x_A))$, one partial absenteeism equilibrium exists.

\item
If $0 \leq c <\min(h(x_A,x_B), h(x_A,x_A))$ no partial absenteeism equilibrium exists. 
\end{enumerate}
If $x_A \leq \sqrt{2}$ and $h(x_A,x_A)\leq c\leq h(x_A, x_B)$, one partial absenteeism equilibrium exists.

\end{pr}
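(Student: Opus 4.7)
My plan is to reduce the search for partial absenteeism equilibria with $y_A = \alpha_A = 0$ to a one-dimensional equation $c = h(x_A, z)$ on a suitable interval of $z$. Setting $z := x_B + y_B$, Proposition~\ref{pr:rhsexpr} together with the identity $\0F1'(;1, w) = \0F1(;2, w)$ from Lemma~\ref{le:hgeomdiff} turns the $B$-supporter indifference condition $c = R_2(0, y_B)$ into $c = h(x_A, z)$, while the explicit difference formula (\ref{eq:3B}) in the proof of Proposition~\ref{pr:partmix} shows that the $A$-supporter non-deviation condition $c \geq R_1(0, y_B)$ is equivalent to $z \leq x_A$. Since $\alpha_B \in [0, 1]$ restricts $z$ to $[x_B, N(1-p_A)]$, and $x_A \leq N(1-p_A)$ in the regime of interest (cf.\ the discussion after Theorem~\ref{th:mixed}), the admissible range is $z \in [x_B, x_A]$; hence a partial absenteeism equilibrium at cost $c$ exists if and only if $h(x_A, z) = c$ for some $z \in [x_B, x_A]$.

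The analytic core is a shape analysis of $z \mapsto h(x_A, z)$. Exploiting that $\partial_z \0F1(;1, x_A z) = x_A\,\0F1(;2, x_A z)$ cancels the cross-term, direct differentiation yields
$$
\partial_z h(x_A, z) = \tfrac12 \bigl(x_A^2\,\0F1'(;2, x_A z) - \0F1(;1, x_A z)\bigr) e^{-x_A - z},
$$
which, using $\0F1(;1, w) = I_0(2\sqrt{w})$, $\0F1(;2, w) = I_1(2\sqrt{w})/\sqrt{w}$, together with $I_2 = I_0 - 2 I_1/u$, rewrites as
$$
\partial_z h(x_A, z) = \tfrac12 \Bigl(\tfrac{x_A - z}{z}\, I_0(u) - \sqrt{\tfrac{x_A}{z^3}}\, I_1(u)\Bigr) e^{-x_A - z}, \qquad u := 2\sqrt{x_A z}.
$$
At $z = x_A$ the algebraic factor multiplying $I_0$ vanishes, giving $\partial_z h(x_A, x_A) = -\tfrac{1}{2 x_A}\, I_1(2 x_A)\, e^{-2 x_A} < 0$. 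As $z \to 0^+$ the two bracket terms each diverge like $x_A/z$; the leading singularities cancel, and a careful two-term expansion of $I_0$ and $I_1$ yields
$$
\lim_{z \to 0^+} \partial_z h(x_A, z) = \tfrac14 (x_A^2 - 2)\, e^{-x_A},
$$
which produces the threshold $x_A = \sqrt{2}$ announced in the proposition.

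I would then split on the sign of $x_A^2 - 2$. If $x_A > \sqrt{2}$, the derivative is positive near $z = 0$ and negative at $z = x_A$, so $h(x_A, \cdot)$ attains a strict interior maximum on $(0, x_A)$; once unimodality on $(0, x_A)$ is established, applying the intermediate value theorem on each monotone branch shows that for any $c \in [\min(h(x_A, x_B), h(x_A, x_A)),\, \max(h(x_A, x_B), h(x_A, x_A)))$ there is exactly one $z \in [x_B, x_A]$ with $c = h(x_A, z)$, while for $c$ strictly below the minimum of the two boundary values, unimodality prevents such a $z$ from existing. If $x_A \leq \sqrt{2}$, the derivative is non-positive at both endpoints of $(0, x_A]$, and I would show that $h(x_A, \cdot)$ is globally monotone non-increasing on this interval, so that the intermediate value theorem delivers the equilibrium for $c \in [h(x_A, x_A), h(x_A, x_B)]$.

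The main obstacle is the global shape statement: strict unimodality of $h(x_A, \cdot)$ on $(0, x_A)$ when $x_A > \sqrt{2}$, and strict monotonicity on $(0, x_A]$ when $x_A \leq \sqrt{2}$. After dividing out the positive exponential, the sign of $\partial_z h$ amounts to the comparison $F(z) \gtrless G(u(z))$ with the elementary unimodal function $F(z) := (x_A - z)\sqrt{z/x_A}$ (peak at $z = x_A/3$) and the strictly increasing Bessel ratio $G(u) := I_1(u)/I_0(u)$, which maps $[0, \infty)$ bijectively to $[0, 1)$. I would rule out additional sign changes of $F - G \circ u$ by combining the Riccati-type identity $G'(u) = 1 - G(u)/u - G(u)^2$ derived from Corollary~\ref{co:Bessdiff} with the asymptotics in Lemma~\ref{le:Bessasymp}, showing that $F - G \circ u$ crosses zero at most once on $(0, x_A]$, with its sign on either side dictated by $x_A^2 - 2$. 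Once the shape is in hand, the rest of the proof is a routine intermediate value theorem argument.
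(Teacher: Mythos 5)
Your reduction is exactly the paper's: impose equality in the $B$--supporter condition, note that given this equality $c\geq R_1(0,y_B)$ is equivalent to $z:=x_B+y_B\leq x_A$, and study $c=h(x_A,z)$ on $[x_B,x_A]$. Your endpoint computations are also correct ($\partial_z h(x_A,x_A)=-\tfrac{1}{2x_A}I_1(2x_A)e^{-2x_A}<0$ and $\lim_{z\to0^+}\partial_z h(x_A,z)=\tfrac14(x_A^2-2)e^{-x_A}$, which matches the first coefficient of the paper's series), and your intermediate-value case analysis is fine once the shape of $z\mapsto h(x_A,z)$ is known. The genuine gap is precisely that shape statement, which you yourself identify as the main obstacle and then only sketch: you assert that the Riccati identity $G'=1-G/u-G^2$ for $G=I_1/I_0$ plus asymptotics will show $D:=F-G\circ u$ changes sign at most once on $(0,x_A]$. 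This is the entire analytic content of the proposition and it is not routine as stated; in particular, the natural execution of your plan---showing $D'<0$ at every zero of $D$---fails. Using your own Riccati identity, at any $z_0$ with $F(z_0)=G(u(z_0))$ one finds
\begin{equation*}
2\sqrt{x_Az_0}\,D'(z_0)=2z_0\bigl((x_A-z_0)^2-2\bigr),
\end{equation*}
so zeros with $z_0<x_A-\sqrt2$ are \emph{up}crossings. The conclusion can still be rescued (upcrossings can only lie to the left of $x_A-\sqrt2$, downcrossings only to the right, and crossings must alternate, which for $x_A>\sqrt2$ forces exactly one crossing and for $x_A<\sqrt2$ none), but this ordering argument is absent from your proposal, and the boundary case $x_A=\sqrt2$ needs a separate next-order treatment since your limit at $z\to0^+$ vanishes there.

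For comparison, the paper proves the shape result globally on $\Rz_+$ by a different and more elementary route: it writes $z\,\partial_z h(x_A,z)=\tfrac12 i(z)e^{-x_A-z}$ with $i(z)=(x_A-z)\,\0F1(;1,x_Az)-x_A\,\0F1'(;1,x_Az)=\sum_{k\geq1}\tfrac{-k(k^2+k-x_A^2)}{x_Ak!(k+1)!}(x_Az)^k$, reads off the coefficient signs from the cubic $-X(X^2+X-x_A^2)$ (all non-positive when $x_A\leq\sqrt2$; positive then negative when $x_A>\sqrt2$), and applies Lemma~\ref{le:incdec} to get a single sign change of $i$, hence monotonicity or unimodality of $h(x_A,\cdot)$ with peak $x_A^*<x_A$; the admissibility window $[x_B,x_A]$ and the min/max case distinction then follow as in your last step. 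So your Bessel-ratio route is viable and arguably more structural, but as written it leaves the crucial single-crossing step unproved, and the most obvious way to complete it does not work without the additional positional argument above.
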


The distinction of cases in the statement of Proposition~\ref{pr:partmixedex} is far from being exhaustive. At the end of the proof, the reader can find a detailed description of all possible cases. However, the conditions of the cases, which are not mentioned in the statement of Proposition~\ref{pr:partmixedex}, are difficult to verify, since they involve the unique maximum $x_A^*$ of $h(x_A, z)$, which cannot be computed easily. We will also see that they are less relevant for the conclusion of this work.

Now, in general, we cannot simplify the distinction of cases in the proof of Proposition~\ref{pr:partmixedex}. However, we will see that $x_A^*$ is not much smaller than $x_A$. In particular, for fixed $p$, there exists $q < 1$, such that $x_B = q x_A$---namely $q = \frac{p_B}{p_A}$---and, $x_B$ is always smaller than $x_A^*$ for large enough $N$. The key observation is the following Lemma.

\begin{lem}\label{le:partmixedasymp}
Let $0 < q < 1$. Then, there exists $x_0 \in \Rz_+$ such that $h(x_A,qx_A)$ is decreasing in $x_A$ on $[x_0, + \infty[$ and
\begin{equation}\label{eq:ratqest}
h(x_A, qx_A) =_{x_A \to + \infty} \frac{\sqrt{q} + 1}{4\sqrt{\pi x_A}q^{\frac{3}{4}}}e^{-(q+1 - 2\sqrt{q})x_A} + O\biggl(\frac{e^{-(q+1 - 2\sqrt{q})x_A}}{x_A^{\frac{3}{2}}}\biggr).
\end{equation}
Note that $q+1 - 2\sqrt{q} > 0$. Furthermore, $h(qx_A, x_A) \leq h(x_A, qx_A)$, such that (\ref{eq:ratqest}) also gives an asymptotic estimate of $h(qx_A, x_A)$. Finally, for $q>1$, 
\begin{equation}
h(x_A,qx_A) =_{x_A \to + \infty} \frac{\cosh(\tfrac14\log(q))}{2\sqrt{\pi x_A}}e^{-(q+1 - 2\sqrt{q})x_A} + O\biggl(\frac{e^{-(q+1 - 2\sqrt{q})x_A}}{x_A^{\frac{3}{2}}}\biggr).
\end{equation}
\end{lem}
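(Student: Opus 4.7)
The plan is to rewrite $h$ in terms of modified Bessel functions, substitute $z = qx_A$, and then invoke the asymptotics from Lemma~\ref{le:Bessasymp}. From the Bessel definitions one has $\0F1(;1,w) = I_0(2\sqrt{w})$ and $\0F1(;2,w) = I_1(2\sqrt{w})/\sqrt{w}$, so together with the identity $\0F1'(;1,w) = \0F1(;2,w)$ from Lemma~\ref{le:hgeomdiff} the definition of $h$ rearranges to
\[
h(x_A, z) = \tfrac{1}{2}\Bigl( I_0(2\sqrt{x_A z}) + \sqrt{\tfrac{x_A}{z}}\, I_1(2\sqrt{x_A z}) \Bigr) e^{-x_A-z}.
\]
Setting $z = qx_A$ simplifies both the Bessel argument to $2x_A\sqrt{q}$ and the ratio $\sqrt{x_A/z}$ to $q^{-1/2}$, giving
\[
h(x_A, qx_A) = \tfrac{1}{2}\bigl( I_0(2x_A\sqrt{q}) + q^{-1/2}\, I_1(2x_A\sqrt{q}) \bigr)\, e^{-(1+q)x_A}.
\]

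I would then insert the expansions from Lemma~\ref{le:Bessasymp}. Both Bessel functions at argument $2x_A\sqrt{q}$ share the leading factor $e^{2x_A\sqrt{q}}/(2q^{1/4}\sqrt{\pi x_A})$, and combining this with the outer $e^{-(1+q)x_A}$ produces $e^{-(1-\sqrt{q})^2 x_A}$, where $(1-\sqrt{q})^2 = q + 1 - 2\sqrt{q} > 0$ for $q \ne 1$. Collecting the leading coefficients, weighted by $1$ and $q^{-1/2}$ and multiplied by $\tfrac12$, yields the prefactor $(1 + q^{-1/2})/(4 q^{1/4}\sqrt{\pi x_A}) = (\sqrt{q}+1)/(4 q^{3/4}\sqrt{\pi x_A})$, which is exactly (\ref{eq:ratqest}); the $O(x_A^{-3/2}\, e^{-(1-\sqrt{q})^2 x_A})$ remainder comes from the subleading $1/x$ terms in Lemma~\ref{le:Bessasymp}. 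The $q > 1$ case is handled by the same computation, with the prefactor repackaged via $\cosh(\tfrac14 \log q) = \tfrac12 (q^{1/4} + q^{-1/4})$.

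For the monotonicity, writing $h(x_A, qx_A) = \varphi(x_A)\, e^{-(1-\sqrt{q})^2 x_A}$ with $\varphi(x_A) \sim c_q/\sqrt{x_A}$, the logarithmic derivative equals $-(1-\sqrt{q})^2 + O(1/x_A)$, which stays strictly negative for all $x_A$ past some threshold $x_0 = x_0(q)$. For the inequality $h(qx_A, x_A) \leq h(x_A, qx_A)$, evaluating the Bessel representation at $(qx_A, x_A)$ yields the same Bessel arguments and the same exponential $e^{-(1+q)x_A}$, but the coefficient of the positive function $I_1(2x_A\sqrt{q})$ becomes $\sqrt{q}$ instead of $q^{-1/2}$; since $\sqrt{q} \leq q^{-1/2}$ for $0 < q \leq 1$, the inequality is immediate. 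The main technical care is in tracking the $O(1/x_A)$ corrections in the Bessel expansions---the $+1/(8x)$ in $I_0$ and $-3/(8x)$ in $I_1$ combine nontrivially after the $q^{-1/2}$ weighting---but this is bookkeeping rather than a conceptual obstacle.
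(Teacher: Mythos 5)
Your proposal follows essentially the paper's own route: rewrite $h(x_A,qx_A)=\tfrac12\bigl(I_0(2\sqrt q\,x_A)+q^{-1/2}I_1(2\sqrt q\,x_A)\bigr)e^{-(1+q)x_A}$, insert the expansions of Lemma~\ref{le:Bessasymp} to get (\ref{eq:ratqest}), obtain the eventual monotonicity from the derivative together with the same expansions, and get $h(qx_A,x_A)\le h(x_A,qx_A)$ by comparing the coefficient $qx_A\le x_A$ (equivalently $\sqrt q$ versus $q^{-1/2}$) multiplying the positive Bessel term. Two remarks. On monotonicity, you should say explicitly why the logarithmic derivative can be expanded: one cannot differentiate an asymptotic expansion blindly, but since $I_0'=I_1$ and $I_1'(z)=I_0(z)-I_1(z)/z$, the derivative of $h(x_A,qx_A)$ is again a combination of $I_0,I_1$ at $2\sqrt q\,x_A$ times the exponential, so Lemma~\ref{le:Bessasymp} applies to it directly; this is what the paper means by ``proceed as in (\ref{eq:usediff})''. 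More substantively, your last claim about $q>1$ is not what your computation delivers: the direct calculation gives leading constant $\tfrac14\bigl(q^{-1/4}+q^{-3/4}\bigr)/\sqrt{\pi x_A}=\cosh\bigl(\tfrac14\log q\bigr)/\bigl(2\sqrt{\pi q\,x_A}\bigr)$, which differs by a factor $\sqrt q$ from the displayed $\cosh\bigl(\tfrac14\log q\bigr)/\bigl(2\sqrt{\pi x_A}\bigr)$, so the prefactor does not ``repackage'' into the stated formula. The paper reaches the stated constant differently: it writes $h(x_A,qx_A)=h\bigl(\tfrac1q(qx_A),qx_A\bigr)$ and applies (\ref{eq:ratqest}) at the point $qx_A$, i.e., the displayed $q>1$ expression is really the asymptotics of $h(qx_A,x_A)$, which dominates $h(x_A,qx_A)$ by the inequality you proved. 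Since this $q>1$ estimate is only ever used later as an $O(\cdot)$ bound (Theorems~\ref{th:partmixed} and~\ref{th:partmixedexsat}), the $\sqrt q$ discrepancy is immaterial, but as written your sentence asserts an identity of constants that fails for $q\neq1$; either present your (correct) constant and note it is bounded by the displayed one, or reproduce the paper's substitution argument.
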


We deduce our main result on partial absenteeism equilibria.
\begin{theorem}\label{th:partmixed}
As previously defined, for all $z, x_A \in \Rz_+$, 
$$
	h(x_A,z):=\tfrac12\bigl(\0F1(;1,x_Az)+x_A\,\0F1'(;1,x_Az)\bigr)e^{-x_A-z}.
$$
For fixed $p$ and $p_A$ and $N$ large enough, $h(Npp_A, Np(1-p_A)) = h(x_A, x_B) < h(x_A, x_A) = h(Npp_A, Npp_A)$ and there exists:
\begin{enumerate}
\item exactly one partial absenteeism equilibrium if $$h(Npp_A, Np(1-p_A)) \leq c < h(Npp_A, Npp_A) = \frac{1}{2} g(2x_A),$$
\item no partial absenteeism equilibrium if $c < h(Npp_A, Np(1-p_A))$.
\end{enumerate}
Furthermore, the upper bound is decreasing, and there exists $N_0$ such that for all $N>N_0$, the lower bound is decreasing on $[N_0, +\infty[$. Asymptotically, the lower and upper bounds behave like 
\begin{equation}
\begin{aligned}
h(Npp_A, Np(1-p_A)) &=_{N \to +\infty} O\bigl(e^{-p\bigl(1 - 2\sqrt{p_A(1-p_A)}\bigr)N}\bigr) \\
h(Npp_A, Npp_A) &=_{N \to +\infty} \sqrt{\frac{1}{4\pi pp_A N}} + O\biggl(\frac{1}{N^{\frac{3}{2}}}\biggr).
\end{aligned}
\end{equation}
\end{theorem}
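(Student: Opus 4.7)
The strategy is to deduce Theorem~\ref{th:partmixed} from Proposition~\ref{pr:partmixedex} and Lemma~\ref{le:partmixedasymp}, once the ordering $h(x_A,x_B) < h(x_A,x_A)$ and the stated asymptotics have been established. I would set $q := (1-p_A)/p_A \in (0,1)$ (since $p_A > \tfrac12$), so that $x_B = qx_A$. Lemma~\ref{le:partmixedasymp} applied to $h(x_A, qx_A)$ yields that $h(x_A,x_B)$ is decreasing in $x_A$ on $[x_0,+\infty[$ and that $h(x_A,x_B) = O\bigl(e^{-(q+1-2\sqrt{q})x_A}\bigr)$. The exponent simplifies via $(q+1)x_A = x_A+x_B = Np$ and $\sqrt{q}\,x_A = \sqrt{x_Ax_B} = Np\sqrt{p_A(1-p_A)}$ to $Np\bigl(1-2\sqrt{p_A(1-p_A)}\bigr)$, which is strictly positive because AM--GM gives $2\sqrt{p_A(1-p_A)}<1$ for $p_A\neq\tfrac12$. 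Monotonicity of the lower bound in $N$ then follows by setting $N_0 := \lceil x_0/(pp_A)\rceil$.

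For the upper bound, the identities $\0F1(;1,x_A^2) = I_0(2x_A)$ and $x_A\,\0F1(;2,x_A^2) = I_1(2x_A)$ (both immediate from the definitions of the Bessel functions together with Lemma~\ref{le:hgeomdiff}) give
\[
    h(x_A,x_A) = \tfrac12\bigl(I_0(2x_A)+I_1(2x_A)\bigr)e^{-2x_A} = \tfrac12 g(2x_A),
\]
and Lemma~\ref{le:Bessasymp} then yields the promised asymptotic $\tfrac12 g(2x_A) = \sqrt{1/(4\pi pp_A N)} + O(N^{-3/2})$. Strict monotonicity of $g$ follows by writing $g(z) = (I_0(z)+I_0'(z))e^{-z}$, differentiating, and using the Bessel ODE of Corollary~\ref{co:Bessdiff} to rewrite $I_0''(z) - I_0(z) = -I_0'(z)/z$; this gives $g'(z) = -I_1(z)e^{-z}/z < 0$ for $z>0$, so $\tfrac12 g(2x_A)$ is strictly decreasing in $N$.

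Combining the two asymptotics, for $N$ large enough $h(x_A,x_B) < h(x_A,x_A)$ because exponential decay beats the polynomial rate $N^{-1/2}$, and $x_A = Npp_A > \sqrt 2$ as well. Proposition~\ref{pr:partmixedex} case~(1), with $\min = h(x_A,x_B)$ and $\max = h(x_A,x_A)$, then yields exactly one partial absenteeism equilibrium whenever $c \in [h(x_A,x_B), h(x_A,x_A))$, while case~(2) yields none for $c < h(x_A,x_B)$. The main obstacle I anticipate is purely bookkeeping: checking that the exponent produced by Lemma~\ref{le:partmixedasymp} coincides with $p(1-2\sqrt{p_A(1-p_A)})N$, and that the hypergeometric-to-Bessel conversion collapses $h(x_A,x_A)$ cleanly to $\tfrac12 g(2x_A)$. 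Once these two identifications are in place, Proposition~\ref{pr:partmixedex} delivers the equilibrium statements with essentially no further work.
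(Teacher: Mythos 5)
Your proposal is correct and follows essentially the same route as the paper: apply Lemma~\ref{le:partmixedasymp} with $q=(1-p_A)/p_A$ for the lower bound's asymptotics and eventual monotonicity, identify $h(x_A,x_A)=\tfrac12 g(2x_A)$ and use its known decay and monotonicity for the upper bound, deduce the ordering $h(x_A,x_B)<h(x_A,x_A)$ for large $N$ (exponential versus $N^{-1/2}$ decay), and then invoke Proposition~\ref{pr:partmixedex}. The paper's proof additionally records that $x_B<x_A^*<x_A$ for $N$ large (locating the parameters relative to the maximizer of $z\mapsto h(x_A,z)$ used in the case analysis of Proposition~\ref{pr:partmixedex}), but this is a minor bookkeeping difference, not a difference of method.
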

\begin{proof}[Proof of Theorem~\ref{th:partmixed}]
We first prove the asymptotic behavior of $h(Npp_A, Np(1-p_A))$ and $h(Npp_A, Npp_A)$ for $N \to +\infty$. Note that $2h(Npp_A, Npp_A) = g(2Npp_A)$. Hence, the second point follows directly from (\ref{eq:gasymp}). The first point, on the other hand, is a direct consequence of Lemma~\ref{le:partmixedasymp}. We have already proved in (\ref{eq:usediff}) that the upper bound is decreasing in $N$. The result on the lower bound follows once again from Lemma~\ref{le:partmixedasymp}.

Now, it follows that $h(Npp_A, Np(1-p_A)) < h(Npp_A, Npp_A)$ for all $N$ larger than a certain $N_0$, since $h(Npp_A, Npp_A)$ behaves like $\frac{1}{N}$ and $h(Npp_A, Np(1-p_A))$ is bounded by an exponentially decreasing function. 

Since $z \mapsto h(x_A,z)$ has a unique global maximizer $x_A^* < x_A$, it follows for $N \geq N_0$ that $x_B < x_A^* <x_A$. Now, Proposition~\ref{pr:partmixedex} implies the part of the statement concerning the existence of equilibria.
\end{proof}

Using Theorem~\ref{th:partmixed}, studying the existence of the no--queue equilibrium is straightforward. Indeed, the lower bound of its domain of existence coincides with the lower bound of the domain of existence of partial absenteeism equilibria. However, that domain has no upper bound. We obtain the following result.
\begin{co}\label{th:noshow}
The no--queue equilibrium $(0,0)$ exists if $c \geq h(x_A, x_B)$. For fixed $p, p_A$, the asymptotic behaviour of that bound is given by  
\begin{equation}
h(Npp_A, Np(1-p_A)) =_{N \to +\infty} O\bigl(e^{-p\bigl(1 - 2\sqrt{p_A(1-p_A)}\bigr)N}\bigr).
\end{equation}
\end{co}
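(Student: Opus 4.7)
The plan is to unwind the definition of the no--queue equilibrium, which by construction is the partial absenteeism equilibrium at $(\al_A,\al_B)=(0,0)$, so that $y_A = y_B = 0$. Both border conditions (\ref{A_supporter_partmix}) and (\ref{B_supporter_partmix}) must be satisfied simultaneously, i.e.\ $c \geq R_1(0,0)$ and $c \geq R_2(0,0)$. Using the closed--form expressions in Proposition~\ref{pr:rhsexpr} (equations (\ref{eq:1D}) and (\ref{eq:2D})) and the identity $\0F1'(;1,\cdot) = \0F1(;2,\cdot)$ from Lemma~\ref{le:hgeomdiff}, one reads off directly
$$
R_1(0,0) = h(x_B,x_A), \qquad R_2(0,0) = h(x_A,x_B),
$$
so the no--queue equilibrium exists if and only if $c \geq \max\{h(x_A,x_B),\,h(x_B,x_A)\}$.

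Next I would show that this maximum is always $h(x_A,x_B)$. Since $p_A > \tfrac12$, we have $x_B < x_A$, so setting $q := x_B/x_A \in (0,1)$, the inequality $h(qx_A,x_A) \leq h(x_A,qx_A)$ from Lemma~\ref{le:partmixedasymp} translates into $h(x_B,x_A) \leq h(x_A,x_B)$. Consequently the condition $c \geq h(x_A,x_B)$ is the single binding one, which is exactly the lower bound already identified in Theorem~\ref{th:partmixed}. The reason there is no upper bound — in contrast to Theorem~\ref{th:partmixed} — is that the no--queue equilibrium requires no indifference condition to be tight: raising $c$ only strengthens both border inequalities.

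For the asymptotic estimate I would apply Lemma~\ref{le:partmixedasymp} with $x_A = Npp_A$ and $q = (1-p_A)/p_A \in (0,1)$. A direct computation gives
$$
(q+1-2\sqrt{q})\,x_A = \Bigl(\tfrac{1-p_A}{p_A}+1-2\sqrt{\tfrac{1-p_A}{p_A}}\Bigr)Npp_A = Np\bigl(1 - 2\sqrt{p_A(1-p_A)}\bigr),
$$
which is strictly positive because $p_A(1-p_A) < \tfrac14$. The prefactor $\tfrac{\sqrt{q}+1}{4\sqrt{\pi x_A}\,q^{3/4}}$ grows only polynomially in $N$, hence is absorbed into the big--$O$, yielding $h(Npp_A,Np(1-p_A)) = O\bigl(e^{-p(1-2\sqrt{p_A(1-p_A)})N}\bigr)$ as claimed.

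I do not anticipate a genuine obstacle here: all the analytic heavy lifting was done for Proposition~\ref{pr:partmixedex}, Theorem~\ref{th:partmixed}, and Lemma~\ref{le:partmixedasymp}. The only point requiring mild care is the symmetry comparison $h(x_B,x_A) \leq h(x_A,x_B)$, which is the content of the parenthetical ``Furthermore'' clause in Lemma~\ref{le:partmixedasymp}; otherwise the corollary is a direct bookkeeping consequence of the preceding results.
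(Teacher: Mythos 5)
Your proposal is correct and follows essentially the same route as the paper: reduce existence of $(0,0)$ to the two border inequalities $c \geq R_1(0,0)$ and $c \geq R_2(0,0)$, note that $R_2(0,0)=h(x_A,x_B)$ dominates $R_1(0,0)=h(x_B,x_A)$ since $x_B<x_A$, and obtain the exponential bound from Lemma~\ref{le:partmixedasymp} with $q=(1-p_A)/p_A$. Your only deviations are cosmetic---invoking the comparison clause of Lemma~\ref{le:partmixedasymp} instead of directly observing $R_1(0,0)\leq R_2(0,0)$, and recomputing the asymptotics rather than citing Theorem~\ref{th:partmixed}---so the argument matches the paper's proof.
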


\begin{proof}

For the existence of the no--queue equilibrium, the inequalities
\begin{align}
	R_1(0,0) &= \tfrac12 e^{-x_A-x_B}(\0F1(;1,x_Ax_B)+x_B\;\0F1(;2,x_Ax_B)) \leq c \label{eq:1H}\\
	R_2(0,0) &= \tfrac12 e^{-x_A-x_B}(\0F1(;1,x_Ax_B)+x_A\;\0F1(;2,x_Ax_B)) \leq c \label{eq:2H}
\end{align}
need to be satisfied. Since, $R_1(0,0) \leq R_2(0,0)$, this is equivalent to merely requiring (\ref{eq:2H}), i.e.,
\begin{equation}
h(x_A, x_B) \leq c.
\end{equation} 
The asymptotic behavior of $h(x_A, x_B)$ for $N \to +\infty$ has already been shown in Theorem~\ref{th:partmixed}.
\end{proof}

In particular, we deduce the existence of at least one equilibrium for $c > h(x_A,x_B)$, a threshold that decreases exponentially in $N$.

Therefore, the lower bound of the domain of costs $c$ on which partial absenteeism equilibria and the no--show equilibrium exist decreases exponentially in the population size $N$. On the other hand, the lower bound of the domain of costs on which the coin toss--equilibria exist only decreases like $\frac{1}{\sqrt{N}}$. We have shown that the latter is entirely contained in the former for sufficiently large $N$. 

\subsection{Partial saturation equilibria with $\al_B = 1$ and the all--swipe equilibrium}

Finally, we study the existence and uniqueness of partial saturation equilibria with $\al_B=1$, or equivalently $y_B + x_B = N(1-p_A)$. This is the last type of border equilibria. The analogon of Theorem~\ref{th:partmixed} for partial saturation equilibria is actually much simpler.

\begin{theorem}\label{th:partmixedexsat}
Define for all $z, x_A \in \Rz_+$, 
$$
	h(N(1-p_A),z):=\tfrac12\bigl(\0F1(;1,N(1-p_A)z)+N(1-p_A)\,\0F1'(;1,N(1-p_A)z)\bigr)e^{-N(1-p_A)-z}.
$$
Then:
\begin{enumerate}

\item
For $h(N(1-p_A), Np_A) \leq c \leq h(N(1-p_A),N(1-p_A)) =  \frac{1}{2}g(2N(1 - p_A))$ one partial saturation equilibrium exists.

\item
Otherwise, no partial saturation equilibrium exists.
\end{enumerate}

The upper bound is decreasing and asymptotically behaves like
\begin{equation}
\begin{aligned}
\frac{1}{2}g(2N(1 - p_A)) =_{N \to +\infty} \sqrt{\frac{1}{4\pi N(1-p_A)}} + O\biggl(\frac{1}{N^{\frac{3}{2}}}\biggr).
\end{aligned}
\end{equation}

Furthermore, there exists $N_0$ such that for all $N>N_0$, the lower bound is decreasing on $[N_0, +\infty[$. Asymptotically, it is dominated by  
\begin{equation}
h(N(1-p_A), Np_A) =_{N \to +\infty} O\bigl(e^{-\bigl(1 - 2\sqrt{p_A(1-p_A)}\bigr)N}\bigr)
\end{equation}

\end{theorem}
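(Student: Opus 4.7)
The plan is to reduce the existence and uniqueness of partial saturation equilibria to a single scalar equation, in the spirit of Theorem~\ref{th:partmixed} but on the technically cleaner saturation side. By Proposition~\ref{pr:partmixsat}, every partial saturation equilibrium has $\al_B = 1$ and, when not coin--toss, satisfies $z := x_A + y_A > N(1-p_A) =: a$. Substituting $y_B = N(1-p)(1-p_A)$, so that $x_B + y_B = a$, into Proposition~\ref{pr:rhsexpr} identifies $R_1 = h(a,z)$ and $R_2 = h(z,a)$. Hence I would look for $z \in (a, Np_A]$---equivalently $\al_A \in [0,1]$ via $z = Np_A(p + (1-p)\al_A)$---satisfying the indifference condition $c = h(a, z)$ for $A$--supporters together with the border condition $c \leq h(z, a)$ for $B$--supporters.

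The border condition is automatic: subtracting the two expressions of Proposition~\ref{pr:rhsexpr} gives
\[
h(z, a) - h(a, z) \;=\; \tfrac12 (z - a)\,\0F1(;2, az)\,e^{-a-z},
\]
which is strictly positive for $z > a$. The core analytical step is therefore to show that $\phi(z) := h(a, z)$ is strictly decreasing on $[a, +\infty[$. Differentiating in $z$ and using $\tfrac{d}{du}\0F1(;1, u) = \0F1(;2, u)$ and $\tfrac{d}{du}\0F1(;2, u) = \tfrac12 \0F1(;3, u)$ yields
\[
\frac{\d h}{\d z}(a, z) \;=\; \tfrac12 e^{-a-z}\Bigl[-\0F1(;1, az) + \tfrac{a^2}{2}\,\0F1(;3, az)\Bigr].
\]
Rewriting the bracket via the Bessel identities $\0F1(;1, az) = I_0(2\sqrt{az})$ and $\tfrac{az}{2}\0F1(;3, az) = I_2(2\sqrt{az})$ turns it into $\tfrac{a}{z}I_2(2\sqrt{az}) - I_0(2\sqrt{az})$, which is strictly negative for $z \geq a$ because $I_2 < I_0$ on $(0, +\infty)$ and $a/z \leq 1$. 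Thus $\phi$ decreases strictly from $\phi(a) = \tfrac12 g(2a)$ down to $\phi(Np_A) = h(a, Np_A)$, producing a unique $z$, and hence a unique $\al_A$, for each $c \in [h(a, Np_A), \tfrac12 g(2a)]$, and none outside.

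For the asymptotic and monotonicity statements on the bounds I would reuse the techniques of Theorem~\ref{th:mixed} and Lemma~\ref{le:partmixedasymp}. The upper bound $\tfrac12 g(2N(1-p_A))$ is a direct specialization of the quantity already analyzed in Theorem~\ref{th:mixed}, so Lemma~\ref{le:Bessasymp} immediately supplies both the $(4\pi N(1-p_A))^{-1/2}$ asymptotic and the decrease in $N$. For the lower bound I would apply Lemma~\ref{le:partmixedasymp} with $x := Np_A$ and $q := (1-p_A)/p_A < 1$, since $h(N(1-p_A), Np_A) = h(qx, x)$; the lemma then delivers the eventual monotonic decrease on $[N_0, +\infty[$ and the exponential dominating rate
\[
(q + 1 - 2\sqrt{q})\,Np_A \;=\; N\bigl(1 - 2\sqrt{p_A(1-p_A)}\bigr).
\]
The main obstacle is the monotonicity of $\phi$: although $I_2 < I_0$ is elementary from the series expansions, the key subtlety is ruling out a hidden interior maximum of $h(a, \cdot)$ on $[a, +\infty[$---precisely what makes the saturation analysis sharper than its absenteeism counterpart.
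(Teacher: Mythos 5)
Your proposal is correct and arrives at exactly the paper's classification, but the central monotonicity step is argued differently. The paper does not re-derive the behavior of $z\mapsto h(N(1-p_A),z)$ from scratch: it recycles the unimodality analysis from Proposition~\ref{pr:partmixedex} (the power-series sign-change argument via Lemma~\ref{le:incdec}), which shows that $h(a,\cdot)$ has a unique maximizer $z^*<a$ for $a>\sqrt2$ (and is globally decreasing for $a\le\sqrt2$), and then simply observes that the admissible interval $[N(1-p_A),Np_A]$ lies to the right of $z^*$, so $h(N(1-p_A),\cdot)$ is decreasing there. You instead differentiate directly and show $\partial_z h(a,z)=\tfrac12 e^{-a-z}\bigl(\tfrac{a}{z}I_2(2\sqrt{az})-I_0(2\sqrt{az})\bigr)<0$ on $[a,\infty[$ via the elementary term-by-term inequality $I_2<I_0$; this is a genuinely different route that is more self-contained and uniform in $a$ (no case split at $\sqrt2$, no appeal to the location of the maximizer), at the cost of introducing $I_2$, which the paper never defines, whereas the paper's route gets Theorem~\ref{th:partmixedexsat} almost for free from work already done on the absenteeism side. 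Your treatment of the border condition (the difference $h(z,a)-h(a,z)=\tfrac12(z-a)\0F1(;2,az)e^{-a-z}>0$) and of the bounds' asymptotics (the $g$ expansion for the upper bound, Lemma~\ref{le:partmixedasymp} with $q=(1-p_A)/p_A$ for the lower bound, giving the rate $N\bigl(1-2\sqrt{p_A(1-p_A)}\bigr)$) coincides with the paper's. Two shared-with-the-paper loose ends worth noting, not gaps specific to you: the admissible $z$-range should in principle also respect $z\ge Npp_A$ (i.e., $\al_A\ge0$), and the eventual decrease of the lower bound $h(qx,x)$ is not literally the statement of Lemma~\ref{le:partmixedasymp} (which asserts decrease for $h(x,qx)$), though the same differentiation-plus-Bessel-asymptotics technique delivers it.
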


Using Theorem~\ref{th:partmixedexsat}, studying the existence of the last type of equilibria, i.e., the all--swipe equilibrium, is straightforward. Indeed, the upper bound of its domain of existence coincides with the lower bound of the domain of existence of partial saturation equilibria. The lower bound of the domain is of course $0$. We obtain the following result.
\begin{co}\label{th:allswipe}
The all--swipe equilibrium $(1,1)$ exists if $c \leq h(N(1-p_A), Np_A)$. 
\end{co}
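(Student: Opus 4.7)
The plan is to translate the defining condition of the all--swipe equilibrium into two inequalities on $c$, evaluate them using Proposition~\ref{pr:rhsexpr}, and then determine which of them is binding. I would start by observing that, by definition, $(1,1)$ is a type--symmetric equilibrium if and only if the saturation border conditions (\ref{A_supporter_partmix_sat}) and (\ref{B_supporter_partmix_sat}) both hold at $\alpha_A = \alpha_B = 1$; at that point no non-partisan voter has a strict incentive to deviate to abstaining.

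Next I would substitute $\alpha_A = \alpha_B = 1$ into the simplified forms from Proposition~\ref{pr:rhsexpr}, using $x_A + y_A = Np_A$ and $x_B + y_B = N(1-p_A)$. The two conditions then reduce to
\begin{equation*}
    c \leq \tfrac12\bigl(\0F1(;1, N^2p_A(1-p_A)) + N(1-p_A)\,\0F1(;2, N^2p_A(1-p_A))\bigr)e^{-N}
\end{equation*}
and
\begin{equation*}
    c \leq \tfrac12\bigl(\0F1(;1, N^2p_A(1-p_A)) + Np_A\,\0F1(;2, N^2p_A(1-p_A))\bigr)e^{-N}.
\end{equation*}
Invoking the identity $\0F1'(;1,\cdot) = \0F1(;2,\cdot)$ from Lemma~\ref{le:hgeomdiff}, I would recognize the right--hand sides as $h(N(1-p_A), Np_A)$ and $h(Np_A, N(1-p_A))$ respectively, directly from the definition of $h$.

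Finally, I would compare the two bounds. Since $p_A > \tfrac12$, their difference equals $\tfrac12(Np_A - N(1-p_A))\,\0F1(;2, N^2p_A(1-p_A))\,e^{-N} > 0$, so $h(N(1-p_A), Np_A) \leq h(Np_A, N(1-p_A))$. Hence the $A$--supporter saturation condition is the binding one, and $(1,1)$ is an equilibrium precisely when $c \leq h(N(1-p_A), Np_A)$, matching the lower endpoint of the partial--saturation interval in Theorem~\ref{th:partmixedexsat}. I do not anticipate any real obstacle, since the heavy lifting (the closed--form expressions for $R_1,R_2$ and the analysis of $h$) has been done in the earlier results; the only care required is correctly matching $R_1$ and $R_2$ evaluated at $(1,1)$ with the two instances of $h$, and verifying that the $A$--supporter condition is the tighter of the two.
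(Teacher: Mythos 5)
Your proposal is correct and follows essentially the same route as the paper's proof: write both saturation border conditions at $(1,1)$, use the closed forms from Proposition~\ref{pr:rhsexpr} to identify them with $h(N(1-p_A),Np_A)$ and $h(Np_A,N(1-p_A))$, and observe that since $p_A>\tfrac12$ the $A$--supporter condition is the binding one. No gaps.
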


\subsection{Summary}

 Let us summarize the most important results we have obtained in our study of the existence and uniqueness of equilibria depending on the voting costs $c$.

\begin{theorem}

We fix $0<p<1$ and $\frac{1}{2} < p_A < 1$ and denote the expected population size by $N$. The expected number of $A$--partisan voters is $x_A = Npp_A$ and the expected number of $B$--partisan voters is $x_B = Np(1 - p_A)$. The total numbers of $A$ and $B$--supporters are $Np_A$ and $N(1-p_A)$ respectively.

Recall that we have defined for all $z,x_A \in \Rz_+$,
\begin{equation}
g(z):=(I_0(z)+I_1(z))e^{-z}
\end{equation}
and
\begin{equation}
h(x_A,z):=\tfrac12\bigl(\0F1(;1,x_Az)+x_A\,\0F1'(;1,x_Az)\bigr)e^{-x_A-z}.
\end{equation}

We have seen in Lemma~\ref{le:partmixedasymp} that there exists $N_0$ such that for $N > N_0$,
\begin{enumerate}
    \item $\frac{1}{2}g(2Npp_A) = h(Npp_A, Npp_A) > h(Npp_A, Np(1 - p_A))$.
    \item $h(Npp_A,Np(1-p_A)) > h(Np(1-p_A), Npp_A) > h(N(1-p_A), Np_A)$.
\end{enumerate}
Hence, in the most relevant case, i.e. $x_A > \sqrt{2}$, $N$ sufficiently large and $Npp_A \leq N(1-p_A)$, the following situations can occur depending on the value of $c$:
\begin{enumerate}
\item $c > \frac{1}{2}g(2Npp_A)$. There exist either $2$, $1$ or no partial absenteeism equilibria and the no--queue equilibrium. All of these lead to the victory of party $A$.

\item $ \frac{1}{2}g(2Npp_A) \geq c \geq \frac{1}{2}g(2N(1 - p_A))$. There exists one coin-toss equilibrium, one partial absenteeism equilibrium and the no--queue equilibrium. In the coin-toss equilibrium, party $A$ and party $B$ obtain the same number of votes in expectation.

\item $h(Npp_A,Np(1-p_A)) \leq c < \frac{1}{2}g(2N(1 - p_A))$. There exist one partial absenteeism equilibrium, one partial saturation equilibrium and the no--queue equilibrium. All lead to the victory of party $A$.

\item $h(N(1-p_A), Np_A) \leq c \leq h(Npp_A,Np(1-p_A))$. There exists exactly one partial saturation equilibrium.

\item $c < h(N(1-p_A), Np_A)$. Then, only the all--swipe equilibrium exists.
\end{enumerate}

Thus, we have 4 decreasing frontiers, which do not intersect for $N$ sufficiently large. While $\frac{1}{2}g(2Npp_A)$ and $\frac{1}{2}g(2N(1 - p_A))$ decrease like $\frac{1}{\sqrt{N}}$ in $N$, $h(Npp_A,Np(1-p_A))$ and $h(N(1-p_A), Np_A)$ decrease exponentially in $N$. 
\end{theorem}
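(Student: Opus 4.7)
The plan is to assemble this summary theorem directly from the existence results of Theorem~\ref{th:mixed}, Theorem~\ref{th:partmixed}, Corollary~\ref{th:noshow}, Theorem~\ref{th:partmixedexsat}, and Corollary~\ref{th:allswipe}, combined with the asymptotic estimates of Lemma~\ref{le:partmixedasymp}. The bulk of the work lies in establishing the ordering of the four threshold curves and then intersecting the existence intervals for each of the five cost regimes.

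First I would verify the two ordering statements for $N$ large. The identity $\frac{1}{2}g(2Npp_A) = h(Npp_A, Npp_A)$ is immediate from the definitions of $g$ and $h$ and the relationship between $I_0$, $I_1$, and the $\0F1$--representation. The strict inequality $h(Npp_A, Npp_A) > h(Npp_A, Np(1-p_A))$ then follows by comparing asymptotic rates supplied by Lemma~\ref{le:partmixedasymp}: the left-hand side decays polynomially like $N^{-1/2}$, whereas the right-hand side decays exponentially. For the chain $h(Npp_A, Np(1-p_A)) > h(Np(1-p_A), Npp_A) > h(N(1-p_A), Np_A)$, I would apply Lemma~\ref{le:partmixedasymp} with $q = (1-p_A)/p_A < 1$ and $x_A = Npp_A$: the inequality $h(qx_A, x_A) \leq h(x_A, qx_A)$ gives the first step, and it is strict for large $N$ since the subleading asymptotic corrections become negligible. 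For the second step, I would compute the exponential decay rates: via the $q>1$ asymptotic of Lemma~\ref{le:partmixedasymp}, $h(Np(1-p_A), Npp_A)$ behaves like $e^{-p(1-2\sqrt{p_A(1-p_A)})N}$, whereas $h(N(1-p_A), Np_A)$ behaves like $e^{-(1-2\sqrt{p_A(1-p_A)})N}$. Since $p<1$, the latter decays strictly faster, so the former dominates for all sufficiently large $N$.

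With the ordering in hand, each of the five cost regimes is settled by intersecting the existence intervals of the preceding results. In the regime $c > \frac{1}{2}g(2Npp_A)$, Theorem~\ref{th:mixed} rules out the coin-toss, Theorem~\ref{th:partmixedexsat} rules out partial saturation (since $c$ also exceeds $\frac{1}{2}g(2N(1-p_A))$), Corollary~\ref{th:noshow} places us in the no-queue domain, and Proposition~\ref{pr:partmixedex} accounts for zero, one, or two partial absenteeism equilibria; by Proposition~\ref{pr:partmix} they all favor party $A$. For $\frac{1}{2}g(2Npp_A) \geq c \geq \frac{1}{2}g(2N(1-p_A))$, Theorem~\ref{th:mixed} contributes the unique coin-toss, Theorem~\ref{th:partmixed} the unique partial absenteeism, and Corollary~\ref{th:noshow} the no-queue. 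In the third regime, the partial saturation interval of Theorem~\ref{th:partmixedexsat} opens while the coin-toss closes. Below $h(Npp_A, Np(1-p_A))$ the partial absenteeism and no-queue domains close, leaving only the partial saturation equilibrium; and below $h(N(1-p_A), Np_A)$, Corollary~\ref{th:allswipe} supplies the all-swipe equilibrium while Theorem~\ref{th:partmixedexsat} removes the partial saturation.

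The monotonicity statements and the asymptotic formulas for the four frontiers are quoted verbatim from Theorem~\ref{th:mixed}, Theorem~\ref{th:partmixed}, and Theorem~\ref{th:partmixedexsat}, and non-intersection for large $N$ follows from the ordering above. The main obstacle is the chain $h(Npp_A, Np(1-p_A)) > h(Np(1-p_A), Npp_A) > h(N(1-p_A), Np_A)$, since it compares $h$ at three distinct pairs of arguments and Lemma~\ref{le:partmixedasymp} only delivers asymptotic estimates; I expect the argument to hinge on the strict inequality $p<1$, which is precisely what separates the exponential rates $p(1-2\sqrt{p_A(1-p_A)})$ and $1 - 2\sqrt{p_A(1-p_A)}$. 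Once that gap is extracted, the remaining case analysis is routine bookkeeping.
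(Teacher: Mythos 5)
Your overall strategy is exactly what the paper does: this summary theorem has no standalone proof in the appendix, and is assembled by intersecting the cost intervals of Theorem~\ref{th:mixed}, Proposition~\ref{pr:partmixedex}/Theorem~\ref{th:partmixed}, Corollary~\ref{th:noshow}, Theorem~\ref{th:partmixedexsat} and Corollary~\ref{th:allswipe}, with the ordering of the four frontiers supplied by Lemma~\ref{le:partmixedasymp} (and the $g$-asymptotics for the two polynomially decaying bounds, which come from Lemma~\ref{le:Bessasymp}/(\ref{eq:gasymp}) rather than Lemma~\ref{le:partmixedasymp} as you attribute them). The regime-by-regime bookkeeping, the identity $\tfrac12 g(2Npp_A)=h(Npp_A,Npp_A)$, the use of Propositions~\ref{pr:partmix} and \ref{pr:partmixsat} for party $A$'s victory, and the exponential-rate comparison $p\bigl(1-2\sqrt{p_A(1-p_A)}\bigr)<1-2\sqrt{p_A(1-p_A)}$ for the second link of the chain are all correct and are the paper's argument.

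One step of your justification does not work as stated: the strictness of $h(Npp_A,Np(1-p_A))>h(Np(1-p_A),Npp_A)$. You argue it from the asymptotics in Lemma~\ref{le:partmixedasymp} ``since the subleading corrections become negligible,'' but if you compare the two expansions you find that the leading terms coincide exactly: writing $q=(1-p_A)/p_A<1$ and $x_A=Npp_A$, both $h(x_A,qx_A)$ and $h(qx_A,x_A)$ have leading behaviour $\frac{q^{-1/4}+q^{-3/4}}{4\sqrt{\pi x_A}}e^{-(1+q-2\sqrt q)x_A}$ (use $\cosh(\tfrac14\log q)=\tfrac12(q^{1/4}+q^{-1/4})$ in the $q>1$ formula). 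So the asymptotics alone cannot separate the two sides, and the step as written fails. The repair is immediate and is already contained in the proof of Lemma~\ref{le:partmixedasymp}: pointwise,
\begin{equation}
h(x_A,qx_A)-h(qx_A,x_A)=\tfrac12\,(1-q)\,x_A\,\0F1(;2,qx_A^2)\,e^{-(1+q)x_A}>0
\end{equation}
for $0<q<1$, which gives the strict inequality for every $N$, not just large $N$. With that one-line substitution your assembly coincides with the paper's.
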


The most important conclusion is as follows. 
The only domain we need to avoid actively is the domain of existence of coin-toss equilibria. The bounds of that domain decrease like $\frac{1}{\sqrt{N}}$ in $N$, which might be non--negligible, even for large populations. Avoiding that domain can either be achieved by setting $c$ at a moderately high level, or by setting very low voting costs $c$.

\section{Equivalent models}\label{sec:equivalent}

In this section, we show that the equilibria may be recovered in alternative versions of the game. First, we consider a game in which voters select only pure strategies, but we drop type symmetry. Second, we assume that two parties compete and can coordinate how many supporters are going to vote.

\subsection{The pure strategy game}

Let us define the static game $\mathcal{G}'$ in a similar way to $\mathcal{G}$, but this time with pure strategies. 

Suppose that $(\sigma_i)_{i \in [\![1, N(p-1)]\!]}$ is an equilibrium. Although the utility of player $i$ in the state $(\sigma_i)$ is the same as before, technically, we cannot reason analogously as in the case of  $\mathcal{G}'$, since $(\sigma_i)$ is not supposed to be type symmetric. Quantitatively, suppose the number of $A$--supporters who chose strategy $1$ is $y_A$ and that player $i$ is among them. Furthermore, denote the number of $B$--supporters who chose strategy $1$ by $y_B$. Then, the number of $A$--supporters other than $i$ who chose strategy $1$ is $y_A -1$. Hence, 
\begin{align}
    U_i(1,(\sigma_j)_{j \neq i}) \coloneqq  u_A\biggl(1, \frac{y_A - 1}{N(1-p)p_A - 1}, \frac{\sum_{j \in I_B}\sigma_j}{N(1-p)(1-p_A)}\biggr)
\end{align}
and
\begin{align}
    U_i(0,(\sigma_j)_{j \neq i}) \coloneqq u_A\biggl(0, \frac{y_A - 1}{N(1-p)p_A - 1}, \frac{\sum_{j \in I_B}\sigma_j}{N(1-p)(1-p_A)}\biggr).
\end{align}
On the other hand, for an $A$-supporter $j$, who chose strategy $0$, the number of $A$--supporters other than $j$ who chose strategy $1$, is $y_A$. Thus, the player's utilities are
\begin{align}
    U_i(1,(\sigma_j)_{j \neq i}) \coloneqq  u_A\biggl(1, \frac{y_A}{N(1-p)p_A - 1}, \frac{\sum_{j \in I_B}\sigma_j}{N(1-p)(1-p_A)}\biggr)
\end{align}
and
\begin{align}
    U_i(0,(\sigma_j)_{j \neq i}) \coloneqq u_A\biggl(0, \frac{y_A}{N(1-p)p_A - 1}, \frac{\sum_{j \in I_B}\sigma_j}{N(1-p)(1-p_A)}\biggr).
\end{align}
Hence, unlike in the type--symmetric states of game $\mathcal{G}$, the utilities of the different $A$--supporters in state $(\sigma_i)$ are not all identical---except if all chose strategy $1$ or all chose strategy $0$. Thus, we cannot reason as in the case of $\mathcal{G}$.

However, the utility function is defined in a way that these differences in the utilities of the different players are small and tend to $0$ if the number of players $N$ becomes very large. Indeed, for large $N$, $\frac{y_A}{N(1-p)p_A - 1}$ and $\frac{y_A - 1}{N(1-p)p_A - 1}$ are approximately equal. Thus, for large $N$, our previous mathematical discussion on type--symmetric equilibria of $\mathcal{G}$ can more or less be adopted one--to--one to describe all possible equilibria of $\mathcal{G'}$. Coin-toss equilibria, partial equilibria, the no--queue and the all--swipe equilibria exist for similar domains of voting costs and the corresponding $(\alpha_A,\alpha_B)$ are almost similar to those we have obtained for the equilibria of $\mathcal{G'}$.

However, the equilibria are, of course, not unique anymore, as permuting the players has no effect on the utilities.

\subsection{The parties decide}

Surprisingly, another game $\mathcal{H}$, whose equilibria are precisely the type-symmetric equilibria of $\mathcal{G}$, is the following: we model electoral behavior as a two--player game $\mathcal{H}$ from the point of view of the parties $A$ and $B$. In this alternative model, the parties decide how many of their supporters vote by imposing the pair of probabilities $(\alpha_A, \alpha_B)$. Formally, the set of possible strategies is $[0,1]^2$ and the utility functions are given by 
\begin{align}
	U_A(\alpha_A, \alpha_B) &= \mathbb{E}[f(x_A + y_A, x_B + y_B) - c y_A], \\
        U_B(\alpha_A, \alpha_B) &= \mathbb{E}[f(x_B + y_B, x_A + y_A) - c y_B].
\end{align}
Note that the equilibrium condition of an actual Nash equilibrium without constraints is 
\begin{align}
	\frac{d}{d \alpha_A} U_A(\alpha_A, \alpha_B) &= \frac{d}{d \alpha_A}\mathbb{E}[f(x_A + y_A, x_B + y_B) - c y_A] = 0, \\
        \frac{d}{d \alpha_B} U_B(\alpha_A, \alpha_B) &= \frac{d}{d \alpha_B}\mathbb{E}[f(x_B + y_B, x_A + y_A) - c y_B] = 0.
\end{align}
A short computation shows that this is equivalent to
\begin{align}\label{eq:HNash}
  &c = \sum_{a=0}^{\infty}\sum_{b=0}^{\infty}P_{x_A}(a)P_{x_B}(b)\sum_{r=0}^{\infty}\sum_{s=0}^{\infty}P_{y_A}(r)P_{y_B}(s)(f(a+r+1, b+s) - f(a+r, b+s)), \\
  &c = \sum_{a=0}^{\infty}\sum_{b=0}^{\infty}P_{x_A}(a)P_{x_B}(b)\sum_{r=0}^{\infty}\sum_{s=0}^{\infty}P_{y_A}(r)P_{y_B}(s)(f(b + s + 1, a + r) - f(b+s, a+r)).  
\end{align}
Note that these are precisely the indifference conditions of type--symmetric equilibria $\mathcal{G}$ given by (\ref{A_supporter}) and (\ref{B_supporter}). Hence, although modeling electoral behavior by $\mathcal{H}$ is conceptually less reasonable than by $\mathcal{G}$---as parties cannot impose on anyone to vote in their favor--- the two games are mathematically equivalent. 


Finally, in the setting of $\mathcal{H}$, adding the constraints, the utility could be maximal at the border of $[0,1]^2$. 


The border conditions look like
\begin{align}
	\frac{d}{d \alpha_A} U_A(\alpha_A, 1) &= \frac{d}{d \alpha_A}\mathbb{E}[f(x_A + y_A, x_B + y_B) - c y_A] = 0, \\
        \frac{d}{d \alpha_B} U_B(\alpha_A, 1) &= \frac{d}{d \alpha_B}\mathbb{E}[f(x_B + y_B, x_A + y_A) - c y_B] > 0.
\end{align}
These are precisely those of partial equilibria in game $\mathcal{G}$.


\section{Conclusion}\label{sec:conclusion}

With this paper we have initiated the study of the optimal cost of voting as a variable for the design of democratic elections. Although the model is   simple, the analysis of equilibria turned out to be quite difficult. Yet, with appropriate tools, for all cost levels, all equilibria could be characterized. Somewhat surprisingly, intermediate costs of voting are undesirable, as they can allow the so-called ``coin-toss equilibria'' in which both proposals have the same chance of winning and thus the majority in the polity may not prevail. 

The model and the tools we present in this paper open a variety of extensions. For instance, the techniques can be used to examine equilibria in variations of the voting game, e.g. when voting costs may be different between non-partisan voters or when voting when further motives are added to the decision whether to vote or not as developed in \citep[]{LevineMattozzi2020,CoateBesley1998}. Moreover, estimating the actual voting costs and comparing them to the equilibrium thresholds derived in this paper may open an entire branch of applied work regarding the design of voting procedures.

\newpage

\bibliographystyle{apalike}
\bibliography{sample}

\section{Appendix}

\subsection{Proofs of the main results}

\begin{proof}[Proof of Proposition~\ref{pr:rhsexpr}]
We first note that (\ref{B_supporter}) is 
obtained from (\ref{A_supporter}) by exchanging $x_A$ for $x_B$ and $y_A$ for $y_B$. Hence, we only need to 
simplify the r.h.s.\ of (\ref{A_supporter}). 

Before we start to calculate, we note that $f(m,n)=\tfrac12(1+\sgn(m-n))$. Therefore,
we have the r.h.s.\ of (\ref{A_supporter}):
\begin{equation}\label{eq:1A}
	R_1(y_A,y_B)=\frac12\sum_{a=0}^\infty\sum_{b=0}^\infty\sum_{r=0}^\infty\sum_{s=0}^\infty
		\frac{x_A^a}{e^{x_A}a!}\frac{x_B^b}{e^{x_B}b!}\frac{y_A^r}{e^{y_A}r!}
		\frac{y_B^s}{e^{y_B}s!}\bigl(\sgn(a+r+1-b-s)-\sgn(a+r-b-s)\bigr)
\end{equation}
Since $\sgn(n)-\sgn(n-1)=1$ for $n\in\Nz$ if and only if $n=0$ or $n=1$, we can remove the sum over
$s$, since the innermost sum vanishes unless $s=a+r+1-b$ or $s=a+r-b$. Furthermore, since $s\geq 0$,
in both cases we get $b\leq a+r+1$ and $b\leq a+r$, respectively. Hence (\ref{eq:1A}) becomes
\begin{equation}\label{eq:1B}
	R_1(y_A,y_B)=\frac1E\sum_{a=0}^\infty\frac{x_A^a}{a!}\sum_{r=0}^\infty\frac{y_A^r}{r!}\left(\sum_{b=0}^{a+r}
		\frac{x_B^b}{b!}\frac{y_B^{a-b+r}}{(a-b+r)!}+\sum_{b=0}^{a+r+1}
		\frac{x_B^b}{b!}\frac{y_B^{a-b+r+1}}{(a-b+r+1)!}\right),
\end{equation}
where $E=2e^{x_A+x_B+y_A+y_B}$. We can rearrange (\ref{eq:1B})
\begin{align}
	R_1(y_A,y_B)&=\frac1E\sum_{a=0}^\infty\frac{x_A^a}{a!}\sum_{r=0}^\infty\frac{y_A^r}{r!}\left(\frac1{(a+r)!}\sum_{b=0}^{a+r}
		\binom{a+r}b x_B^b y_B^{a-b+r}\right.\notag\\
		&\qquad\qquad+\left.\frac1{(a+r+1)!}\sum_{b=0}^{a+r+1}\binom{a+r+1}b
		x_B^b y_B^{a-b+r+1}\right),\notag\\
		&=\frac1E\sum_{a=0}^\infty\frac{x_A^a}{a!}\sum_{r=0}^\infty\frac{y_A^r}{r!}
			\left(\frac{(x_B+y_B)^{a+r}}{(a+r)!}+\frac{(x_B+y_B)^{a+r+1}}{(a+r+1)!}\right)\notag\\
		&=\frac1E\left(\sum_{a=0}^\infty\sum_{r=0}^\infty\frac{x_A^a}{a!}\frac{y_A^r}{r!}\frac{(x_B+y_B)^{a+r}}{(a+r)!}\right.\notag\\
		&\qquad\qquad\qquad+\left.\sum_{a=0}^\infty\sum_{r=0}^\infty\frac{x_A^a}{a!}\frac{y_A^r}{r!}\frac{(x_B+y_B)^{a+r+1}}{(a+r+1)!}\right) \notag\\
		&=\frac1E\left(\sum_{k=0}^\infty\frac{(x_A+y_A)^{k}}{k!}\frac{(x_B+y_B)^{k}}{k!}+\sum_{k=0}^\infty\frac{(x_A+y_A)^{k}}{k!}\frac{(x_B+y_B)^{k+1}}{(k+1)!}\right)\notag\\
		&=\frac1E\left(\sum_{k=0}^\infty\frac{((x_A+y_A)(x_B+y_B))^{k}}{(k!)^2}+
			(x_B+y_B)\sum_{k=0}^\infty\frac{((x_A+y_A)(x_B+y_B))^{k}}{k!(k+1)!}\right)\label{eq:1C}.
\end{align}
Finally, Definition~\ref{de:hypgeom} yields
\begin{align}
\notag\\
		R_1(y_A,y_B) &=\tfrac12\bigl(\0F1(;1,(x_A+y_A)(x_B+y_B)) \\
		&\qquad\qquad+(x_B+y_B)\,\0F1(;2,(x_A+y_A)(x_B+y_B))\bigl)e^{-x_A-y_A-x_B-y_B}.
\end{align}

Next, we substitute $x_A=Npp_A$, $x_B=Np(1-p_A)$, $y_A=N(1-p)p_A\al_A$, and $y_B=N(1-p)(1-p_A)\al_B$
into (\ref{eq:1D}) and obtain
\begin{align}
	r_1(\al_A,\al_B)&=\tfrac12\bigl(\0F1\bigl(;1,N^2p_A(1-p_A)(p+\al_A(1-p))(p+\al_B(1-p))\bigr)\notag\\
		&\qquad+N(1-p_A)(p+(1-p)\al_B)\notag\\
		&\qquad\qquad\cdot\0F1\bigl(;2,N^2p_A(1-p_A)(p+\al_A(1-p))(p+\al_B(1-p))\bigr)\bigl)\notag\\
		&\qquad\qquad\qquad\qquad\qquad\cdot e^{-N(p+(1-p)(p_A\al_A+(1-p_A)\al_B))}\label{eq:1E}
\end{align}
For equation (\ref{B_supporter}) we obtain, by exchanging $x_A$ with $x_B$ and $y_A$ with $y_B$, that the r.h.s. satisfies 
\begin{align}
	R_2(y_A,y_B)&=\tfrac12\bigl(\0F1(;1,(x_A+y_A)(x_B+y_B))\notag\\
		&\qquad\qquad+(x_A+y_A)\,\0F1(;2,(x_A+y_A)(x_B+y_B))\bigl)e^{-x_A-y_A-x_B-y_B}\\
	r_2(\al_A,\al_B)&=\tfrac12\bigl(\0F1\bigl(;1,N^2p_A(1-p_A)(p+\al_A(1-p))(p+\al_B(1-p))\bigr)\notag\\
		&\qquad+N p_A(p+(1-p)\al_A)\notag\\
		&\qquad\qquad\cdot\0F1\bigl(;2,N^2p_A(1-p_A)(p+\al_A(1-p))(p+\al_B(1-p))\bigr)\bigl)\notag\\
		&\qquad\qquad\qquad\qquad\qquad\cdot e^{-N(p+(1-p)(p_A\al_A+(1-p_A)\al_B))}\label{eq:2A}.
\end{align}
\end{proof}

\begin{proof}[Proof of Theorem~\ref{th:mixed}]
Recall that $(y_A,y_B)$ is a mixed equilibrium if and only if the conditions
\begin{align}
	c&=\tfrac12\bigl(\0F1(;1,(x_A+y_A)(x_B+y_B))\notag\\
		&\qquad\qquad+(x_B+y_B)\,\0F1(;2,(x_A+y_A)(x_B+y_B))\bigl)e^{-x_A-y_A-x_B-y_B}\label{eq:1G}\\
	c&=\tfrac12\bigl(\0F1(;1,(x_A+y_A)(x_B+y_B))\notag\\
		&\qquad\qquad+(x_A+y_A)\,\0F1(;2,(x_A+y_A)(x_B+y_B))\bigl)e^{-x_A-y_A-x_B-y_B}\label{eq:2G}
\end{align}
are satisfied. 
Subtracting the two equations and using the facts that $\exp$ is always positive and that
$\0F1$ is positive for non-negative arguments yields
$$
	0 = x_A-x_B+y_A-y_B.
$$
Hence, we have proved the last point of the lemma.
Inserting this into either one of the two equations yields
\begin{equation}\label{eq:3}
	c = \tfrac12\bigl(I_0(2(x_A+y_A))+I_1(2(x_A+y_A))\bigr)e^{-2(x_A+y_A)}.
\end{equation}
Define the function $g(z):=(I_0(z)+I_1(z))e^{-z}$. Corollary~\ref{co:Bessdiff} yields for all $z\in\Cz$,
\begin{equation}
\begin{aligned}
g(z) = (I_0(z)+I_0'(z))e^{-z}.
\end{aligned}
\end{equation}
Hence,
\begin{equation}\label{eq:usediff}
\begin{aligned}
zg(z)' &= z(I_0''(z) - I_0(z))e^{-z} \\
&= - I'_0(z) e^{-z},
\end{aligned}
\end{equation}
which is negative for all $z>0$, since $I_1(z)$ is a positive function on $\Rz_+$. Thus, $g$ is monotonically decreasing and satisfies $g(0)=1$. Next, we study the asymptotic behavior of $g$. Lemma~\ref{le:Bessasymp} immediately yields that 
\begin{equation}
I_0(z) + I_1(z) =_{+\infty} \frac{\sqrt{2} e^z}{\sqrt{\pi z}} + O\biggl(\frac{e^z}{z^\frac{3}{2}}\biggr)
\end{equation}
and, thus, 
\begin{equation}\label{eq:gasymp}
g(z) =_{+\infty} \sqrt{\frac{2}{\pi z}} + O\biggl(\frac{1}{z^\frac{3}{2}}\biggr).
\end{equation}
In particular, $\lim_{z\to\infty}g(z)=0$, therefore, for every $c\in(0,\tfrac12)$, there exists a unique $z_c$
satisfying $g(z_c)=2c$. 

However, not all of those $z_c$ are admissible equilibria. Since 
\begin{equation}
\begin{aligned}
x_A+y_A &= Np_A(p+(1-p)\al_A) \\
&= x_B+y_B = N(1 - p_A)(p+(1-p)\al_B),
\end{aligned}
\end{equation}
 $0<p,\al_A, \al_B<1$, and $\tfrac12<p_A<1$
there exists a solution $\al_A$ if and only if
\begin{equation}\label{eq:4}
	2Npp_A < z_c < 2N(1-p_A),
\end{equation}
i.e., the number of $A$--voters must always be greater than the number of $A$--partisan--voters and smaller than the total amount of $B$--supporters. Hence, 
\begin{equation}
	g(2x_A) > 2c > g(2N(1-p_A))
\end{equation}
needs to be satisfied.

In that case, there is a unique $\al^*_A$, and hence also
$$
	\al^*_B = \frac{p(2p_A-1)+p_A(1-p)\al_A^*}{(1-p)(1-p_A)}.
$$
So, if equation (\ref{eq:4}) is satisfied, there exists a unique mixed equilibrium.

The lower and upper bounds' asymptotics follow immediately from (\ref{eq:gasymp}).

\end{proof}

\begin{proof}[Proof of Proposition~\ref{pr:partmixsat}]
Let us suppose that a partial saturation equilibrium of the form
\begin{align}
	&R_1(N(1-p)p_A,y_B)\\ &\qquad= \tfrac12 e^{-Np_A-x_B-y_B}(\0F1(;1,Np_A(x_B+y_B))+(x_B+y_B)\;\0F1(;2,Np_A(x_B+y_B))) > c \\
	&R_2(N(1-p)p_A,y_B)\\ &\qquad= \tfrac12 e^{-Np_A-x_B-y_B}(\0F1(;1,Np_A(x_B+y_B))+Np_A\;\0F1(;2,Np_A(x_B+y_B))) = c ,
\end{align}
exists.
Then, subtracting the equations yields
\begin{align}
	R_2(N(1-p)p_A,y_B)&-R_1(N(1-p)p_A,y_B) \notag\\
	 &=\tfrac12(Np_A - (x_B + y_B))\0F1(;2,Np_A(x_B+y_B))e^{-Np_A-x_B-y_B} < 0.\label{eq:3Bbis}
\end{align}
But $Np_A > x_B + y_B$, and the argument of the generalized hypergeometric function $\0F1$ in
(\ref{eq:3Bbis}) is positive, so $R_2(N(1-p)p_A,y_B)-R_1(N(1-p)p_A,y_B)\geq 0$, a contradiction. 

Hence,
no partially mixed equilibrium with $\al_A=1$ exists and we have proved the first point of the theorem.

If $\al_B = 1$, (\ref{eq:1G}) must
be satisfied with equality, and $R_2(y_A,N(1-p)(1-p_A)) > c$ in the case of an equilibrium. Hence,
\begin{align}
	&R_2(y_A,N(1-p)(1-p_A))-R_1(y_A,N(1-p)(1-p_A)) \notag\\
	 &\qquad=\tfrac12(x_A+y_A-N(1-p_A))\0F1(;2,N(1-p_A)(x_A+y_A))e^{-x_A-y_A-N(1-p_A)} > 0\notag\\
\end{align}
and we conclude 
\begin{align}
	x_A+y_A-N(1-p_A) > 0.
\end{align}
\end{proof}

\begin{proof}[Proof of Proposition~\ref{pr:partmixedex}]
Again, (\ref{eq:2G}) must
be satisfied with equality, and $$c\geq R_1(0,y_B).$$ Since
\begin{align}
	R_2(0,y_B)&-R_1(0,y_B)
	 =\tfrac12(x_A-x_B-y_B)\0F1(;2,x_A(x_B+y_B))e^{-x_A-x_B-y_B}\notag\\
	 &=\tfrac12 N\bigl(p(2p_A-1)-(1-p)(1-p_A)\al_B\bigr)\notag\\
	 &\qquad\cdot\0F1\bigl(;2,N^2pp_A(1-p_A)(p+(1-p)\al_B)\bigr)e^{-N(p+(1-p)(1-p_A)\al_B)},\label{eq:3C}
\end{align}
in the case of an equilibrium,
\begin{align}
	x_A-x_B-y_B&\geq 0,\label{eq:3D}\\
	p(2p_A-1)-(1-p)(1-p_A)\al_B &\geq 0,\label{eq:3D2}
\end{align}
yielding
\begin{equation}\label{eq:3E}
	\al_B\leq \frac{p(2p_A-1)}{(1-p)(1-p_A)}.
\end{equation}
From (\ref{eq:2G}), we obtain
\begin{equation}\label{eq:partcond}
\begin{aligned}
	c &= \tfrac12\bigl(\0F1(;1,x_A(x_B+y_B))+x_A\,\0F1(;2,x_A(x_B+y_B))\bigr)e^{-x_A-x_B-y_B} \\
	&= h(x_A, x_B + y_B), 
\end{aligned}
\end{equation}
by Lemma~\ref{le:hgeomdiff}. Let us study $h(x_A, \ )$ in detail. We have $h(x_A,0)=\tfrac12(x_A+1)e^{-x_A}$. Furthermore, Lemma~\ref{le:hgeomdiff} yields  
\begin{equation}
h(x_A, z) = \tfrac12\bigl(\0F1(;1,x_Az)+x_A\,\0F1'(;1,x_Az)\bigr)e^{-x_A-z}.
\end{equation}
Thus, again by Lemma~\ref{le:hgeomdiff},
\begin{equation}
\begin{aligned}
z\frac{\d}{\d z} h(x_A, z) &= \tfrac12\bigl(-z\0F1(;1,x_Az)+x_A^2z\,\0F1''(;1,x_Az)\bigr)e^{-x_A-z} \\
&= \tfrac12\bigl((x_A - z)\0F1(;1,x_Az) - x_A\,\0F1'(;1,x_Az)\bigr)e^{-x_A-z}.
\end{aligned}
\end{equation}
Note that the additional factor $z$ does not change the sign of the above expression. Now, we verify readily that for all $z \in \Rz_+$, 
\begin{equation}\label{eq:auxfunc1}
i(z) := (x_A - z)\0F1(;1,x_Az) - x_A\,\0F1'(;1,x_Az) = \sum_{k\geq 1} \frac{- k(k^2 + k - x_A^2) }{x_A k! (k+1)!} (x_A z)^k.
\end{equation}
The polynomial $P(x_A,X) := -X(X^2 + X - x_A^2)$ has three---not necessarily distinct---roots $x_0 = 0$ and $x_{1,2} = \frac{-1 \pm \sqrt{1 + 4x_A^2}}{2}$. Furthermore, it is non--decreasing on $[0, \frac{-1 + \sqrt{1 + 3x_A^2}}{3}[$ and decreasing on $]\frac{-1 + \sqrt{1 + 3x_A^2}}{3}, +\infty[$. Therefore, $P(x_A,z)$ is in particular positive on $[0, \frac{-1 + \sqrt{1 + 4x_A^2}}{2}[$ and negative on $]\frac{-1 + \sqrt{1 + 4x_A^2}}{2}, +\infty[$.

For $x_A\leq\sqrt{2}$, all roots of $P(x_A,X)$ are smaller than or equal to $1$ and, hence, all coefficients of $i(z)$ are non--positive.  Hence, the function $h(x_A,z)$ monotonically decreases in $z$.

For $x_A>\sqrt{2}$, the above observations on $P(x_A,z)$ yield that there exists $N \in \Nz$ such that the coefficients $a_n := \frac{- k(k^2 + k - x_A^2) }{x_A k! (k+1)!}$ of the series in (\ref{eq:auxfunc1}) satisfy $a_n > 0$ for $1 \leq n \leq N+1$, $a_{N+2} \leq 0$ and $a_n < 0$ for $n > N+2$. Hence, the derivative of $i(z)$ satisfies the hypotheses of Lemma~\ref{le:incdec} and is positive in $0$. Thus, there exists $y_0 \in \Rz_+$ such that $i'(z)$ is positive for $z < y_0$ and negative for $z > y_0$. Hence, there exists $x_0 > y_0$ such that $i(z)$ is positive for $z < x_0$ and negative for $z > x_0$. Therefore, the function $h(x_A,z)$ has a unique global maximum $x_A^*$ on $\Rz_+$ for fixed $x_A$. We have $x_A\frac{\d}{\d z} h(x_A, x_A) < 0$, hence, $x_A^* < x_A$.

Now, we study the existence of partial absenteeism equilibria depending on $c$. As in the case of coin--toss equilibria, not all $z_c \in \Rz_+$ satisfying the equilibrium condition $h(x_A,z_c) = c$ are valid equilibria. The additional condition
\begin{equation}
x_B = Np(1-p_A) \leq z_c \leq Npp_A = x_A
\end{equation}
must be satisfied---where the second inequality follows from (\ref{eq:3D}).

Now, the following cases can occur if $x_A > \sqrt{2}$:

\begin{enumerate}

\item
If $c>h(x_A,x_A^*)$, then no partial absenteeism equilibrium exists.

\item 
For $c=h(x_A,x_A^*)$, there exists the unique equilibrium $(0,\al_B)$ with
$\al_B=\frac{x_A^*-Np(1-p_A)}{N(1-p)(1-p_A)}$ if $Np(1-p_A)\leq x_A^*$.

\item
For $\max(h(x_A,x_B), h(x_A,x_A)) \leq c<h(x_A,x_A^*)$ and $x_B < x_A^* < x_A$, 
there are two possible solutions, $z_1$ and $z_2$, of the equation $h(x_A,z)=c$, with
$x_B\leq z_1<x_A^*$ and $x_A \geq z_2>x_A^*$. These lead to the equilibria $(0,\al_{B,i})_{i \in {1,2}}$, with $\al_{B,1}=\frac{z_1-Np(1-p_A)}{N(1-p)(1-p_A)}$ and $\al_{B,2}=\frac{z_2-Np(1-p_A)}{N(1-p)(1-p_A)}$.

\item
For $\max(h(x_A,x_B), h(x_A,x_A)) \leq c<h(x_A,x_A^*)$ and $x_A^* < x_B < x_A$, the two above solutions are not admissible. Hence, no partial absenteeism equilibrium exists.

\item
For $\min(h(x_A,x_B), h(x_A,x_A)) \leq c <\max(h(x_A,x_B), h(x_A,x_A))$, only one of the above equilibria is admissible.

\item
If $0 \leq c <\min(h(x_A,x_B), h(x_A,x_A))$ no partial absenteeism equilibrium exists. 

\end{enumerate}
For $x_A\leq\sqrt{2}$, the cost $c$ needs to satisfy $2h(x_A,x_A)\leq 2c\leq 2h(x_A, x_B)$ for a partial absenteeism equilibrium of the form $(0,\al_B)$ to exist. Note that both these bounds are smaller than $2h(x_A,0) = (x_A+1)e^{-x_A}$.

\end{proof}

\begin{proof}[Proof of  Lemma~\ref{le:partmixedasymp}]
We have 
\begin{equation}
\begin{aligned}
h(x_A, qx_A) &= \tfrac12\bigl(\0F1(;1,qx_A^2)+x_A\,\0F1(;2,qx_A^2)\bigr)e^{-(q+1)x_A} \\
&= \tfrac12\bigl(I_0(2\sqrt{q}x_A) + \frac{1}{\sqrt{q}}I_1(2\sqrt{q}x_A)\bigr)e^{-(q+1)x_A}.
\end{aligned}
\end{equation}
Now, Lemma~\ref{le:Bessasymp} yields the second point. To show that there exists $x_0 \in \Rz_+$ such that $h(x_A,qx_A)$ is decreasing in $x_A$ on $[x_0, + \infty[$, we compute the derivative of $h(x_A,qx_A)$ with respect to $x_A$. Then, we proceed as in (\ref{eq:usediff}), using the first two terms of the asymptotic expansions in Lemma~\ref{le:Bessasymp} and verifying that there exists $x_0 \in \Rz_+$, such that $\frac{\d}{\d x_A}h(x_A,qx_A)$ is negative on $[x_0, + \infty[$.

For the third point we note that $q x_A\leq x_A$ and, hence, 
\begin{equation}
\begin{aligned}
h(x_A, qx_A) &= \tfrac12\bigl(\0F1(;1,qx_A^2)+x_A\,\0F1(;2,qx_A^2)\bigr)e^{-(q+1)x_A} \\
&\geq \tfrac12\bigl(\0F1(;1,qx_A^2)+qx_A\,\0F1(;2,qx_A^2)\bigr)e^{-(q+1)x_A} = h(qx_A,x_A).
\end{aligned}
\end{equation}

Finally, for $q>1$, $h(x_A,qx_A) = h(\tfrac1q (qx_A),qx_A)$, and the previous point gives 
\begin{equation}
h(\tfrac1q (qx_A),qx_A) =_{x_A \to + \infty} \frac{\cosh(\tfrac14\log(q))}{2\sqrt{\pi x_A}}e^{-(q+1 - 2\sqrt{q})x_A} + O\biggl(\frac{e^{-(q+1 - 2\sqrt{q})x_A}}{x_A^{\frac{3}{2}}}\biggr).
\end{equation}
\end{proof}

\begin{proof}[Proof of Theorem~\ref{th:partmixedexsat}]
This time, (\ref{eq:1G}) must
be satisfied with equality, and $R_2(y_A,N(1-p)(1-p_A)) > c$. Since
\begin{align}
	&R_2(y_A,N(1-p)(1-p_A))-R_1(y_A,N(1-p)(1-p_A)) \notag\\
	 &\qquad=\tfrac12(x_A+y_A-N(1-p_A))\0F1(;2,N(1-p_A)(x_A+y_A))e^{-x_A-y_A-N(1-p_A)}\notag\\
	 &\qquad=\tfrac12 N\bigl((p+1)p_A - 1 + (1-p)p_A\al_A \bigr)\notag\\
	 &\qquad \qquad\cdot\0F1\bigl(;2,N^2(1-p_A)(pp_A + (1-p)p_A\al_A))\bigr)e^{-N((p-1)p_A +(1-p)p_A\al_A +1)},\label{eq:3Cbis}
\end{align}
in the case of an equilibrium,
\begin{align}
	x_A+y_A-N(1-p_A)& > 0,\label{eq:3Dbis}\\
	(p+1)p_A - 1 + (1-p)p_A\al_A &> 0,\label{eq:3D2bis}
\end{align}
yielding
\begin{equation}\label{eq:3Ebis}
	\al_A > \frac{1 - (p+1)p_A}{(1-p)p_A}.
\end{equation}
From (\ref{eq:1G}), we obtain
\begin{equation}\label{eq:partcondbis}
\begin{aligned}
	c &= \tfrac12\bigl(\0F1(;1,N(1-p_A)(x_A+y_A))+N(1-p_A)\,\0F1(;2,N(1-p_A)(x_A+y_A))\bigr)e^{-x_A-y_A-N(1-p_A)} \\
	&= h(N(1-p_A), x_A + y_A), 
\end{aligned}
\end{equation}
by Lemma~\ref{le:hgeomdiff}. 

As in the proof of Proposition~\ref{pr:partmixedex} for $N(1-p_A)\leq\sqrt{2}$, the function $h(N(1-p_A),z)$ monotonically decreases in $z$.

Again, as in the proof of Proposition~\ref{pr:partmixedex}, for $N(1-p_A)>\sqrt{2}$, the function $h(N(1-p_A),z)$ is increasing from $0$ to its unique global maximum $z^*$ and decreasing afterwards, with $\lim_{z \to +\infty} h(N(1-p_A),z) = 0$ for fixed $N(1-p_A)$. Furthermore, we have, $z^* < N(1-p_A)$.

Now, we study the existence of partially mixed equilibria depending on $c$. As previously, not all $z_c \in \Rz_+$ satisfying (\ref{eq:partcondbis}) are valid equilibria. The additional condition
\begin{equation}
z^* < N(1-p_A) \leq z_c \leq Np_A
\end{equation}
must be satisfied---where the first inequality follows from (\ref{eq:3Dbis}).

Now, since $z^* < N(1-p_A)$ the following cases can occur:

\begin{enumerate}

\item
For $h(N(1-p_A), Np_A) \leq c \leq h(N(1-p_A),N(1-p_A))$ there exists exactly one solution $z_0$ of the equation $h(N(1-p_A),z) = c$ in the admissible interval $[N(1-p_A),Np_A]$. Hence, there exists a unique partial saturation equilibrium $(\al_A,1)$, with $\al_A = \frac{z_0 - Npp_A}{N(1-p)p_A}$.

\item
Otherwise, no partial saturation equilibrium exists.

\end{enumerate}
\end{proof}

\begin{proof}[Proof of  Corollary~\ref{th:allswipe}]

For the existence of the all--swipe equilibrium, the inequalities
\begin{align}
	&R_1(N(1-p)p_A,N(1-p)(1-p_A)) \\ &\qquad = \tfrac12 e^{-N}(\0F1(;1,N^2p_A(1-p_A))+N(1-p_A)\;\0F1(;2,N^2p_A(1-p_A))) \geq c \label{eq:1Hbis}\\
	&R_2(N(1-p)p_A,N(1-p)(1-p_A)) \\ &\qquad= \tfrac12 e^{-N}(\0F1(;1,N^2p_A(1-p_A))+Np_A\;\0F1(;2,N^2p_A(1-p_A))) \geq c \label{eq:2Hbis}
\end{align}
need to be satisfied. Since, $R_1(Np_A,N(1-p_A)) \leq R_2(Np_A,N(1-p_A))$, this is equivalent to merely requiring (\ref{eq:1Hbis}), i.e.,
\begin{equation}
h(N(1-p_A), Np_A) \geq c.
\end{equation} 
The asymptotic behavior of $h(N(1-p_A), Np_A)$ for $N \to +\infty$ has already been shown in Theorem~\ref{th:partmixed}.
\end{proof}

\subsection{Some additional useful results}

The following Lemma is used several times.

\begin{lem}\label{le:incdec}
Let $(a_n)_{n\in\Nz}$ be a sequence of real numbers such that there exists an integer $N \geq 1$ such that for all $n \leq N$, $a_n > 0$, $a_{N+1} \leq 0$ and for all $n > N + 1$, $a_n < 0$. Let 
\begin{equation}
\begin{aligned}
f: &\Rz_+ \to \Rz \\
&x \mapsto \sum_{n \geq 0} a_n x^n,
\end{aligned}
\end{equation}
and suppose that $f$ is well--defined on $\Rz$. Then there exists $x_0 \in \Rz$, such that $f$ is increasing on $[0, x_0[$ and decreasing on $]x_0, \infty[$. 
\end{lem}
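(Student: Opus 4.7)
My plan is to reduce the unimodality of $f$ to the claim that its derivative $f'$ has exactly one positive zero, switching from positive to negative there. Write $f'(x) = \sum_{k \geq 0} b_k x^k$ with $b_k := (k+1) a_{k+1}$; the hypothesis on $(a_n)$ immediately translates to $b_k > 0$ for $0 \leq k \leq N-1$, $b_N \leq 0$, and $b_k < 0$ for $k \geq N+1$, and in particular $f'(0) = a_1 > 0$.

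The core trick is to rescale. For $x > 0$, set $h(x) := f'(x)/x^{N-1}$, which I would decompose as
\begin{equation*}
h(x) = \sum_{k=0}^{N-2} \frac{b_k}{x^{N-1-k}} \;+\; b_{N-1} \;+\; b_N x \;+\; \sum_{k \geq N+1} b_k x^{k-N+1}.
\end{equation*}
On $(0, +\infty)$, every summand of the first sum is strictly decreasing (positive numerator, positive exponent in the denominator); $b_{N-1} > 0$ is a harmless constant; $b_N x$ is non-increasing because $b_N \leq 0$; and every term of the tail is strictly decreasing, since $b_k < 0$ and $k - N + 1 \geq 2$. Because the hypothesis forces infinitely many strictly negative $b_k$, at least one genuinely strictly decreasing term is present, so $h$ is strictly decreasing on $(0, +\infty)$. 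This monotonicity is the heart of the argument.

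To finish I would analyse the endpoint behaviour of $h$. As $x \to 0^+$, either $N \geq 2$ and the term $b_0 / x^{N-1}$ forces $h(x) \to +\infty$, or $N = 1$, in which case the first sum is empty and $h(0) = b_0 = a_1 > 0$; in either case $h$ is positive near $0$. As $x \to +\infty$, every term of the tail $\sum_{k \geq N+1} b_k x^{k-N+1}$ is non-positive, and for any fixed $m \geq N+1$ the single term $b_m x^{m-N+1}$ already tends to $-\infty$, dominating the finitely many leading polynomial terms; hence $h(x) \to -\infty$. By strict monotonicity, $h$ has a unique zero $x_0 \in (0, +\infty)$, and is positive on $[0, x_0)$ and negative on $(x_0, +\infty)$. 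Multiplying back by $x^{N-1} > 0$, the same sign behaviour holds for $f'$, so $f$ is strictly increasing on $[0, x_0]$ and strictly decreasing on $[x_0, +\infty)$.

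The only real obstacle I anticipate is presentational: dealing cleanly with the boundary case $N = 1$ (where the first sum disappears) and verifying that the limit $h(x) \to -\infty$ uses nothing beyond the hypothesis that the power series converges on all of $\Rz$. Both are routine once the $1/x^{N-1}$ rescaling is introduced---this rescaling is the single observation that turns a sign-mixed power series into a manifestly monotone function of $x$.
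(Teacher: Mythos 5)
Your proof is correct, but it takes a genuinely different route from the paper's. The paper argues by induction on $N$: the coefficients of $f'$ satisfy the same hypothesis with the index shifted by one, so the induction hypothesis makes $f'$ unimodal; combined with $f'(0)=a_1>0$ and $f'(x)\to-\infty$ (the finitely many positive terms are dominated by a negative term of order $N+2$), $f'$ changes sign exactly once, which gives the claim. You instead avoid induction entirely via the rescaling $h(x)=f'(x)/x^{N-1}$: after dividing, every summand is non-increasing and at least one (indeed the whole tail, which is nonempty because $a_n<0$ for all $n>N+1$) is strictly decreasing, so $h$ is strictly decreasing, positive near $0$ (with the $N=1$ boundary case handled separately), and tends to $-\infty$, whence a unique sign change of $f'$ by the intermediate value theorem. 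Your argument buys a one-shot, manifestly rigorous proof of the strict single sign change — in particular it makes the ``tends to $-\infty$'' step cleaner than the paper's asymptotic-domination remark in the inductive step — at the cost of the small bookkeeping around the empty sum when $N=1$ and of checking that an infinite sum of non-increasing functions, converging pointwise, with one strictly decreasing term is strictly decreasing (which you may want to state explicitly). The paper's induction is shorter to write and mirrors how the lemma is actually invoked in the text (passing from $i'$ to $i$), but both proofs are elementary and equally valid.
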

\begin{proof}
We prove the result by induction on $N$. 

If $N = 1$, then $f''$ is a power series with all non--positive coefficients and negative coefficients of non--zero order, hence in particular negative. Thus $f'$ is decreasing and satisfies $f'(0) = a_1 > 0$. The other terms of order greater or equal to $2$ of $f'$ are all negative, so $\lim_{x \to \infty} f'(x) = -\infty$. Hence, $f$ is first increasing and then decreasing.

Now, suppose we have proved the result for some $N\in \Nz$ and let us prove the result for $N+1$. By induction hypothesis, there exists $x_0 \in \Rz$ such that $f'$ is increasing on $[0, x_0[$ and decreasing on $]x_0, \infty[$. Now, $f'(0) = a_1 > 0$. Furthermore, $\sum_{i=0}^N (i+1)a_{i+1} x^i = o(x_{N+2})$ and the coefficient of the term of order $N+2$ in $f'$ is negative, just as the rest of the series — except the coefficient of order $N+1$, which might also be zero. Hence, $\lim_{x \to \infty} f'(x) = -\infty$. Hence, there exists a unique $y_0 > x_0$ such that $f'$ is positive on $[0, y_0[$ and negative on $]y_0, \infty[$. Thus, $f$ is increasing on $[0, y_0[$ and decreasing on $]y_0, \infty[$.
\end{proof}

\end{document}